\documentclass[onecolumn,10pt]{IEEEtran}

\makeatletter

\usepackage[all]{xy}
\usepackage[latin1]{inputenc}
\usepackage{amsfonts,amsmath,amssymb,amsthm}
\usepackage{indentfirst,latexsym,amscd}
\usepackage{amsbsy}
\usepackage{graphicx,multirow,bm}

\newtheorem{lemma}{Lemma}
\newtheorem{thm}{Theorem}

\newtheorem{exmp}{Example}

\title{Construction of Minimal Ternary Linear Codes with Dimension $m+2$ Via Krawtchouk Polynomials
\author{Haibo Liu, Xin Guo, Qunying Liao}
\thanks {The work of H.B. Liu is supported by the Sichuan Natural Science Foundation with NO.2024NSFSC0417 and NO.2026NSFSC0138, the work of Q.Y. Liao is supported by the Natural Science Foundation of China with No.12471494.}
\thanks{H.B.Liu, School of Applied Mathematics, Chengdu University of Information Technology,
 \small Chengdu, Sichuan, China (Email:liuhaibo@cuit.edu.cn).}
\thanks{X. Guo, School of Applied Mathematics, Chengdu University of Information Technology,
 \small Chengdu, Sichuan, China (Email:484202125@qq.com).}
 \thanks{Q.Y. Liao, School of Mathematical Sciences, Sichuan Normal University,
 \small Chengdu, Sichuan, China (Email:qunyingliao@sicnu.edu.cn)}
}
\date{}

\begin{document}
\baselineskip15pt \maketitle

\begin{abstract}
Recently, minimal linear codes have been extensively studied due to their applications in secret sharing schemes, secure two-party computations, and so on. Constructing minimal linear codes violating the Ashikhmin-Barg condition and then determining their weight distributions have been interesting in coding theory and cryptography. In this paper, a generic construction for ternary linear codes with dimension $m+2$ is presented, where $m$ is an integer, and a necessary and sufficient condition for this ternary linear code to be minimal is derived. Based on this condition and Krawtchouk Polynomials, a new class of minimal ternary linear codes violating the Ashikhmin-Barg condition are obtained, and then their complete weight enumerators are determined.

{\bf Keywords}\quad  linear code, minimal vector, minimal code, complete weight enumerator, Krawtchouk Polynomials.

\end{abstract}
\section{INTRODUCTION}
$ $

Linear codes over finite fields have been extensively studied by researchers according to specific applications in computer and communication systems, data storage devices, and consumer electronics. Minimal linear codes as a special class of linear codes have been widely applied in secret sharing schemes and secure two-party computations \cite{Carlet-Ding-Yuan}\cite{Yuan-Ding}, and so on.

Minimal codewords can be used in linear codes based-access structures in secret sharing schemes, which are protocols with a distribution algorithm, implemented by a dealer and some participants, see \cite{Shamir}. The dealer splits a secret $S$ into different pieces and then distributes them to participants set $\mathcal{P}$, only authorized subsets of $\mathcal{P}$ (access structure $\Gamma$) can be able to reconstruct the secret by using their respective shares. A set of participants $D$ is called a minimal authorized subset if $D\in\Gamma$ and no proper subset of $D$ belong to $\Gamma$. In \cite{Massey}, Massey pointed out that the support of minimal codewords of the dual code can give the access structure of a secret sharing scheme. But to describe the set of minimal codewords of a linear code is quite difficult in general, even in the binary case. To simplify this task, one can try to find linear codes with any codeword minimal, called minimal linear codes. The problem of finding minimal linear codes has first been investigated in \cite{Ding-Yuan}. A sufficient condition for a linear code to be minimal is given in the following lemma.
\begin{lemma}\cite{Ashikhmin-Barg}(Ashikhmin-Barg)\label{lem9}
Let $p$ be a prime, then a linear code $\mathcal{C}$ over $\mathbb{F}_p$ is minimal if
\[\frac{w_{min}}{w_{max}}>\frac{p-1}{p},\]
where $w_{min}$ and $w_{max}$ denote the minimum and maximum nonzero Hamming weights for $\mathcal{C}$, respectively.
\end{lemma}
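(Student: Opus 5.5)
The plan is to argue by contradiction via the standard characterization of non-minimality. Suppose $\mathcal{C}$ is not minimal. Then there are nonzero codewords $\mathbf{a},\mathbf{b}\in\mathcal{C}$ with $\mathrm{supp}(\mathbf{b})\subsetneq\mathrm{supp}(\mathbf{a})$. The case $\mathrm{supp}(\mathbf{b})=\mathrm{supp}(\mathbf{a})$ with $\mathbf{b}$ not a scalar multiple of $\mathbf{a}$ can be reduced to the strict case: choose $\lambda$ so that $\mathbf{a}-\lambda\mathbf{b}$ vanishes on some coordinate of the support but not everywhere. So it suffices to treat $\mathbf{b}$ whose support is strictly covered by that of $\mathbf{a}$. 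Write $S=\mathrm{supp}(\mathbf{a})$, $T=\mathrm{supp}(\mathbf{b})\subsetneq S$, $w(\mathbf{a})=|S|$ and $w(\mathbf{b})=|T|$.

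The key step is an averaging argument over the $p-1$ codewords $\mathbf{a}-\lambda\mathbf{b}$ with $\lambda\in\mathbb{F}_p^*$, which are all nonzero and lie in $\mathcal{C}$.
\begin{itemize}
\item For each coordinate $i\in T$, $b_i\neq0$, so exactly one $\lambda\in\mathbb{F}_p^*$ satisfies $a_i=\lambda b_i$; here $a_i\neq 0$ ensures $\lambda\neq0$. Hence coordinate $i$ is zero in exactly one of the $p-1$ codewords.
\item For $i\in S\setminus T$, coordinate $i$ is nonzero in all of them.
\item Outside $S$, all of them vanish.
\end{itemize}
Summing weights gives
\[\sum_{\lambda\in\mathbb{F}_p^*}w(\mathbf{a}-\lambda\mathbf{b})=(p-1)|S|-|T|.\]
Therefore some $\lambda$ achieves $w(\mathbf{a}-\lambda\mathbf{b})\le |S|-\frac{|T|}{p-1}$.

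Next I bound this minimum weight from both sides. The lower bound is $w_{min}\le w(\mathbf{a}-\lambda\mathbf{b})$. For the upper bound, use $|S|\le w_{max}$ and $|T|\ge w_{min}$. This gives
\[w_{min}\le w_{max}-\frac{w_{min}}{p-1},\qquad\text{so}\qquad w_{min}\cdot\frac{p}{p-1}\le w_{max},\]
that is, $w_{min}/w_{max}\le (p-1)/p$. This contradicts the hypothesis, so $\mathcal{C}$ is minimal.

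The main obstacle is only bookkeeping: handling the equal-support reduction correctly, and checking that the averaging uses $\lambda\ne0$ so each $\mathbf{a}-\lambda\mathbf{b}$ is a genuine nonzero codeword of weight at least $w_{min}$. The combination $\mathbf{a}-\lambda\mathbf{b}$ is nonzero because $\mathbf{a}$ is nonzero on $S\setminus T\neq\emptyset$. In the equal-support reduction, the resulting codeword is nonzero since $\mathbf{b}$ is not proportional to $\mathbf{a}$.
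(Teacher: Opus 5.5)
Your proof is correct and is the standard argument. The paper gives no proof of this lemma; it only cites Ashikhmin--Barg. Your counting identity $\sum_{\lambda\in\mathbb{F}_p^\ast}\mathrm{wt}(\mathbf{a}-\lambda\mathbf{b})=(p-1)\mathrm{wt}(\mathbf{a})-\mathrm{wt}(\mathbf{b})$ is exactly the forward direction of the paper's Lemma~\ref{lem2}.

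One simplification: the reduction to strict inclusion is unnecessary. Whenever $\mathbf{b}\preceq\mathbf{a}$ and $\mathbf{b}$ is not a scalar multiple of $\mathbf{a}$, every $\mathbf{a}-\lambda\mathbf{b}$ with $\lambda\in\mathbb{F}_p^\ast$ is already a nonzero codeword. The chain $(p-1)w_{min}\le(p-1)w_{max}-w_{min}$ then follows directly, including when the supports are equal.
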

With the help of Lemma \ref{lem9}, many minimal linear codes were constructed from linear codes with a few weights \cite{Ding1} \cite{Ding2} \cite{Ding-Ding} \cite{Fu-Klove-Luo-Wei} \cite{Tang-Li-Qi-Zhou-Helleseth} \cite{Xiang} \cite{xu2025r}. The sufficient condition in Lemma \ref{lem9} is not usually necessary for a linear code to be minimal. In recent years, searching for minimal linear codes with $\frac{w_{min}}{w_{max}}\leq \frac{p-1}{p}$ has been an interesting research topic. In 2018, Chang and Hyun \cite{Chang-Hyun} made a breakthrough and constructed an infinite family of minimal binary linear codes with $\frac{w_{min}}{w_{max}}\leq \frac{1}{2}$ by the generic construction
\begin{eqnarray}\label{shizi1}
\mathcal{C}_f=\{(uf(\mathbf{x})+\mathbf{v}\cdot \mathbf{x})_{\mathbf{x}\in {\mathbb{F}_p^m}^\ast}:u\in \mathbb{F}_p,\mathbf{v}\in\mathbb{F}_p^m\}.
\end{eqnarray}
Based on the generic construction, a lot of minimal linear codes are obtained, which do not satisfy the Ashikhmin-Barg condition. Ding et al. \cite{Ding-Heng-Zhou} gave a necessary and sufficient condition for a binary linear code to be minimal, and then employed special Boolean functions to obtain three classes of minimal binary linear codes. Heng et al. \cite{Heng-Ding-Zhou} used the characteristic function of a subset in $\mathbb{F}_3^m$ to construct a class of minimal ternary linear codes. Bartoli and Bonini \cite{Bartoli-Bonini} generalized the construction of minimal linear codes in \cite{Heng-Ding-Zhou} from ternary case to be odd characteristic case. Bonini and Borello \cite{Bonini-Borello} presented a family of minimal codes arising from cutting blocking sets. Tao et al. \cite{Tao-Feng-Li} obtained three-weight or four-weight minimal linear codes by using partial difference sets. In order to improve the dimension $\mathcal{C}_f$, Liu and Liao \cite{Liu-Liao1} employed two Boolean functions to obtain an family of minimal binary linear codes with dimension $m+2$. Recently, based on three Boolean functions, Shaikh et al \cite{Shaikh} presented an family of minimal binary linear codes with dimension $m+3$. Meanwhile, there are other ways to construct minimal linear codes violating the Ashikhmin-Barg condition, see \cite{alfarano2022three} \cite{borello2024geometric}  \cite{Bartoli-Bonini-Gunes} \cite{cardinali2025minimal} \cite{Li-Yue} \cite{Liu-Liao} \cite{Lu-Wu-Cao} \cite{Mesnager} \cite{Mesnager1} \cite{Tang-Qi-Liao-Zhou} \cite{Xu-Qu} \cite{Zhang}.

Till now, a lot of  minimal non-binary linear codes violating the Ashikhmin-Barg condition are constructed, and their weight distributions are given, but the dimension of these minimal linear codes is either $m$ or $m+1$, which leads the code rate lower. In this paper, a generic construction of minimal ternary linear codes with dimension $m+2$ violating the Ashikhmin-Barg condition are presented, and their complete complete weight enumerators are determined.

The rest of the paper is organized as follows. Section 2 provides the definitions and notations will be used in this sequel. Section 3 presents a generic construction for ternary linear codes with dimension $m+2$, and gives a necessary and sufficient condition for this ternary linear codes to be minimal. Basing on this condition and Krawtchouk Polynomials, Section 4 obtains a new class of minimal ternary linear codes violating the Ashikhmin-Barg condition from specific functions, and then determines their complete weight enumerators. Section 5 concludes the whole paper and gives the further study.

\section{Preliminaries}
$ $

Throughout the whole paper, let $p$ be a prime, and denote $\mathbb{F}_p$ to be the finite field with $p$ elements. An $[n,k,d]$ linear code $\mathcal{C}$ over $\mathbb{F}_ p$ is a $k$-dimensional subspace of $\mathbb{F}_p^n$ with minimum (Hamming) distance $d$. For any $i=1,\dots,n$, $A_i$ denotes the number of codewords with Hamming weight $i$ in $\mathcal{C}$ of length $n$. The $\emph{weight}$ $\emph{enumerator}$ of $\mathcal{C}$ is defined by
\[1+A_1z+A_2z^2+\cdots+A_nz^n.\]
The $\emph{weight}$ $\emph{distribution}$ $(1,A_1,\dots,A_n)$ is an important research topic in coding theory, since it contains some crucial information as to estimate the error-correction capability and the probability of error-detection and correction with respect to some algorithms \cite{Ding-Ding}. $\mathcal{C}$ is said to be a $t$-weight code if the number of nonzero $A_i$ in the sequence $(A_1,\dots,A_n)$ is equal to $t$. For a codeword $\mathbf{c}=(c_1,\dots,c_n)\in \mathcal{C}$, the complete weight enumerator of $\mathbf{c}$ is the monomial
\[w(\mathbf{c})=w_0^{t_0}w_1^{t_1}\cdots w_{p-1}^{t_{p-1}}\]
in the variables $w_0,w_1,\dots,w_{p-1}$, where $t_i(0\leq i\leq p-1)$ is the number of components of $\mathbf{c}$ equal to $w_i$. Then the complete weight enumerator of  $\mathcal{C}$ is \[CWE(\mathcal{C})=\sum_{\mathbf{c}\in \mathcal{C}}w(\mathbf{c}).\]

For a vector $\mathbf{a}=(a_1,\dots,a_n)\in \mathbb{F}_p^n$, the $\emph{S}upport$ of $\mathbf{a}$ is defined by
\[\text{Supp}(\mathbf{a})=\{1\leq i\leq n:a_i\neq 0\}.\]
Let $wt(\mathbf{a})$ be the Hamming weight of $\mathbf{a}$, then $wt(\mathbf{a})=|\text{Supp}(\mathbf{a})|$. For $\mathbf{b}\in \mathbb{F}_p^n$, we say that $\mathbf{a}$ covers $\mathbf{b}$ if $\text{Supp}(\mathbf{b})\subseteq \text{Supp}(\mathbf{a})$,
denoted by $\mathbf{b}\preceq \mathbf{a}$. If $\mathbf{b}\preceq \mathbf{a}$, the relation between $wt(\mathbf{a})$ and $wt(\mathbf{b})$ is given below.
\begin{lemma}\cite{Heng-Ding-Zhou}\label{lem2}
For any $\mathbf{a},\mathbf{b}\in \mathbb{F}_p^n$, $\mathbf{b}\preceq \mathbf{a}$ if and only if
\[\sum_{c \in \mathbb{F}_p^\ast} \mathrm{wt}(\mathbf{a}+c \mathbf{b})=(p-1) \mathrm{wt}(\mathbf{a})-\mathrm{wt}(\mathbf{b})\]
\end{lemma}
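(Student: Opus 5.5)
The identity decomposes coordinate by coordinate, so the plan is to compare the two sides one position at a time. For each $i\in\{1,\dots,n\}$ put
\[
\delta_i=\sum_{c\in\mathbb{F}_p^\ast}\mathrm{wt}(a_i+cb_i)-(p-1)\,\mathrm{wt}(a_i)+\mathrm{wt}(b_i),
\]
where $\mathrm{wt}$ of a scalar is $1$ if it is nonzero and $0$ otherwise. Since Hamming weight is additive over coordinates, the difference between the left- and right-hand sides of the claimed identity equals $\sum_{i=1}^n\delta_i$. It then suffices to compute $\delta_i$ in each of the four cases determined by whether $a_i$ and $b_i$ vanish, and to show that $\delta_i\ge 0$ always, with equality except in exactly one bad case.

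The case computations are as follows.

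\begin{itemize}
\item If $a_i=b_i=0$, every term is $0$, so $\delta_i=0$.
\item If $a_i\neq0$ and $b_i=0$, then $a_i+cb_i=a_i\neq0$ for all $p-1$ values of $c$. This gives $\delta_i=(p-1)-(p-1)+0=0$.
\item If $a_i\neq0$ and $b_i\neq0$, the equation $a_i+cb_i=0$ has the unique solution $c=-a_i/b_i$, which lies in $\mathbb{F}_p^\ast$. So exactly $p-2$ of the summands are nonzero, and $\delta_i=(p-2)-(p-1)+1=0$.
\item If $a_i=0$ and $b_i\neq0$, then $cb_i\neq0$ for every $c\in\mathbb{F}_p^\ast$. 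Hence $\delta_i=(p-1)-0+1=p>0$.
\end{itemize}

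Consequently
\[
\sum_{c\in\mathbb{F}_p^\ast}\mathrm{wt}(\mathbf{a}+c\mathbf{b})-(p-1)\mathrm{wt}(\mathbf{a})+\mathrm{wt}(\mathbf{b})=p\cdot\bigl|\mathrm{Supp}(\mathbf{b})\setminus\mathrm{Supp}(\mathbf{a})\bigr|.
\]
The right-hand side is zero if and only if $\mathrm{Supp}(\mathbf{b})\subseteq\mathrm{Supp}(\mathbf{a})$, that is, if and only if $\mathbf{b}\preceq\mathbf{a}$. This gives both directions of the lemma at once.

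There is no real obstacle in this argument. The only point that needs care is the third case: one has to use that $\mathbb{F}_p$ is a field, so that $c=-a_ib_i^{-1}$ exists, is unique, and is nonzero.
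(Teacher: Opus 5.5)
Your proof is correct and complete. The case values $\delta_i\in\{0,p\}$ give the exact identity $\sum_{c\in\mathbb{F}_p^\ast}\mathrm{wt}(\mathbf{a}+c\mathbf{b})-(p-1)\mathrm{wt}(\mathbf{a})+\mathrm{wt}(\mathbf{b})=p\,|\mathrm{Supp}(\mathbf{b})\setminus\mathrm{Supp}(\mathbf{a})|$, and both directions of the lemma follow from it at once.

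The paper gives no proof of this lemma; it quotes it from Heng, Ding and Zhou. Your coordinatewise count over the support partition is the standard argument for this identity, and it works verbatim over any $\mathbb{F}_q$ with $p$ replaced by $q$.
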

A codeword $\mathbf{c}$ in $\mathcal{C}$ is minimal if $\mathbf{c}$ covers only those codewords $u\mathbf{c}$ $(u\in \mathbb{F}_p^\ast)$. $\mathcal{C}$ is said to be minimal if every codeword in $\mathcal{C}$ is minimal.

The Krawtchouk polynomial was first introduced by Lloyd in 1957 \cite{Lloyd}. It has been applied in coding theory, cryptography and combinatorics \cite{krawtchouk}. Here we only give a brief introduction to the Krawtchouk polynomial with the essential properties. For more details, see \cite{krawtchouk} \cite{Lloyd}. Let $m$, $h$ be positive integers and $x$ be a variable taking nonnegative values. The \emph{Krawtchouk polynomial }(of degree $t$ with parameters $h$ and $m$ ) is defined by
\[K_t^h(x,m)=\sum_{j=0}^t(-1)^j(h-1)^{t-j}{x\choose j}{{m-x}\choose {t-j}}.\]
Accordingly, the \emph{Lloyd polynomial } $\Psi_k^h(x,m)$ (of degree $t$ with parameters $h$ and $m$) is given by
\[\Psi_k^h(x,m)=\sum_{t=0}^kK_t^h(x,m).\]
The following Lemma \ref{lem3} will be useful in the sequel.
\begin{lemma}\cite{Heng-Ding-Zhou}\label{lem3}
Let symbols and notations be defined as the above. Suppose that $1\leq x\leq m$, $1\leq k\leq m-1$, $\mathbf{w}\in \mathbb{Z}_p^m$ with Hamming weight $wt(\mathbf{w})=i$. Then the followings hold,

1 $\Psi_k^h(x,m)=K_k^h(x-1,m-1)$;

2 $K_t^h(0,m)=(h-1)^t{m\choose t}$;

3 $|\Psi_k^h(x,m)|\leq(h-1)^k{{m-1}\choose k}$;

4 $\sum_{\mathbf{v}\in  \mathbb{Z}_p^m,wt(\mathbf{v})=t}\zeta_p^{\mathbf{w}\cdot\mathbf{v}}=K_t^p(i,m)$,\\
where $\zeta_p$ denotes the $p$-th primitive root of complex unity, and the inner product $\mathbf{w}\cdot\mathbf{v}$ in $\mathbb{Z}_p^m$ is defined by $$\mathbf{w}\cdot\mathbf{v}=u_1v_1+\cdots+u_mv_m.$$
\end{lemma}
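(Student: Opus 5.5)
The plan is to derive all four items from the generating function of the Krawtchouk polynomials. For integers $0\le x\le m$, the binomial theorem applied to each factor gives
\[
\sum_{t=0}^{m}K_t^h(x,m)z^t=(1-z)^x\bigl(1+(h-1)z\bigr)^{m-x}.
\]
Indeed, the coefficient of $z^t$ on the right is $\sum_j (-1)^j{x\choose j}(h-1)^{t-j}{{m-x}\choose {t-j}}$, which is exactly the definition of $K_t^h(x,m)$.

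\textbf{Item 2.} For $x=0$, only the $j=0$ term survives, which gives $(h-1)^t{m\choose t}$ immediately.

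\textbf{Item 1.} Taking partial sums of coefficients corresponds to multiplying by $(1-z)^{-1}=\sum_{s\ge0}z^s$. So $\Psi_k^h(x,m)$ is the coefficient of $z^k$ in
\[
(1-z)^{x-1}\bigl(1+(h-1)z\bigr)^{m-x}.
\]
Since $x\ge1$, this is a polynomial, and it is precisely the generating function for parameters $(x-1,m-1)$, because $(m-1)-(x-1)=m-x$. Hence $\Psi_k^h(x,m)=K_k^h(x-1,m-1)$.

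\textbf{Item 3.} By item 1 it suffices to bound $|K_k^h(y,n)|$ with $y=x-1$ and $n=m-1$. The triangle inequality on the defining sum shows that $|K_k^h(y,n)|$ is at most the coefficient of $z^k$ in $(1+z)^y(1+(h-1)z)^{n-y}$. Since $h-1\ge1$, this coefficient is dominated coefficientwise by that of $(1+(h-1)z)^n$, namely $(h-1)^k{n\choose k}$. With $n=m-1$ this is the claimed bound.

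\textbf{Item 4.} I would group the vectors $\mathbf{v}$ of weight $t$ by their support $S$, with $|S|=t$. For a fixed $S$, the sum factorizes over the coordinates $l\in S$ as $\prod_{l\in S}\sum_{v_l\in\mathbb{F}_p^\ast}\zeta_p^{w_lv_l}$. Orthogonality of additive characters gives each factor the value $-1$ if $w_l\neq0$ and $p-1$ if $w_l=0$. Let $j=|S\cap\mathrm{Supp}(\mathbf{w})|$. There are ${i\choose j}{{m-i}\choose{t-j}}$ supports with a given $j$, each contributing $(-1)^j(p-1)^{t-j}$. Summing over $j$ yields $K_t^p(i,m)$.

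The only step needing any care is item 3, specifically justifying the coefficientwise domination. The generating-function viewpoint makes this transparent, so I expect no real obstacle.
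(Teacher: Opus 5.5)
Your proof is correct for $h\ge 2$, which is the only case the paper uses ($h=3$, and $h=p$ in item 4). The generating function $\sum_t K_t^h(x,m)z^t=(1-z)^x\bigl(1+(h-1)z\bigr)^{m-x}$ gives item 2 immediately. It also gives item 1, since multiplying by $(1-z)^{-1}$ turns coefficients into partial sums and lowers both parameters by one. Grouping the weight-$t$ vectors by support and factorizing the character sum gives item 4 exactly. For item 3, the triangle inequality followed by the coefficientwise comparison of $(1+z)^y$ with $\bigl(1+(h-1)z\bigr)^y$ is valid. Multiplying by a polynomial with nonnegative coefficients preserves that comparison, so you land on $(h-1)^k\binom{m-1}{k}$.

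The paper gives no proof of this lemma; it imports it from Heng, Ding and Zhou. So your argument is a self-contained substitute rather than a variant of an in-paper proof. The more common route to item 3 goes through item 4. There $K_k^h(y,n)$ is a sum of $(h-1)^k\binom{n}{k}$ roots of unity, namely one per weight-$k$ vector of $\mathbb{Z}_h^n$, so its absolute value is at most $K_k^h(0,n)$. Your domination argument reaches the same bound without a character-sum realization, and it works verbatim for any real $h\ge2$, whereas the character-sum route needs $h$ to be an integer.

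You do well to invoke $h-1\ge1$ explicitly, because item 3 fails for $h=1$, which the paper's phrase ``$h$ a positive integer'' formally allows. There item 1 gives $\Psi_k^1(x,m)=(-1)^k\binom{x-1}{k}$, while the claimed bound is $0$ for $k\ge1$. The lemma should therefore be read with $h\ge2$, which is harmless for the paper.
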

Note that the upper bound for $|\Psi_k^h(x,m)|$ in Lemma \ref{lem3} is tight, since
\[\Psi_k^h(1,m)=K_k^h(0,m-1)=(h-1)^k{{m-1}\choose k}.\]
This paper focuses on the case $h=3$, for convenience, denote $K_t^3(x,m)$ and $\Psi_k^3(x,m)$ to be $K_t(x,m)$ and $\Psi_k(x,m)$ respectively in the rest of paper. Assume $f(\mathbf{x})$ is a function from $\mathbb{F}_p^m$ to $\mathbb{F}_p$ such that $f(\mathbf{0})=0$ and $f(\mathbf{b})\neq 0$ for at least one $\mathbf{b}\in \mathbb{F}_p^m$. For any $\mathbf{w}\in \mathbb{F}_p^m$, Recall that the Walsh transform of $f$ is given by
\[\widehat{f}(\mathbf{w})=\sum_{\mathbf{x}\in\mathbb{F}_p^m}\zeta_p^{f(\mathbf{x})-\mathbf{w}\cdot \mathbf{x}},\]
where $\zeta_p=\textrm{e}^{\frac{2\pi\sqrt{-1}}{p}}$ is a primitive $p$-th root of unity.

\section{A general construction of ternary linear codes from functions}

In this section, let $f(x)$ and $g(x)$ be functions from $\mathbb{F}_3^m$ to $\mathbb{F}_3$, and define $\mathcal{F}=\{f,g,f+g,f-g\}$ such that the following conditions hold for any $F\in \mathcal{F}$,

$\bullet$ $F$ is non-zero function;

$\bullet$ $F(\mathbf{0})=0$;

$\bullet$ For any $\mathbf{w}\in\mathbb{F}_3^m$, $F(\mathbf{x})\neq \mathbf{w}\cdot \mathbf{x}$.

We now define a linear code $\mathcal{C}_{f, g}$ by
\begin{eqnarray}\label{shizi2}
\mathcal{C}_{f, g}=\left\{\mathbf{c}_{f, g}(\mathbf{v})=(u f(\mathbf{x})+r g(\mathbf{x})+\mathbf{v} \cdot \mathbf{x})_{\mathbf{x} \in {\mathbb{F}_3^m}^\ast}: u \in \mathbb{F}_3, r \in \mathbb{F}_3, \mathbf{v} \in \mathbb{F}_3^m\right\}.
\end{eqnarray}
Note that the code $\mathcal{C}_f$ in \cite{Heng-Ding-Zhou} is a subcode of $\mathcal{C}_{f,g}$. And the dimension and weight distribution of $\mathcal{C}_{f,g}$ are established in the following theorem.

\begin{thm}\label{thm1}
Let the notations be defined as above, then $\mathcal{C}_{f, g}$ of (\ref{shizi2}) has length $3^m-1$ and dimension $m+2$. Moreover, the weight distribution of $\mathcal{C}_{f, g}$ is given by the following multiset union,

\begin{eqnarray*}
& \left\{\left\{2\left(3^{m-1}-\frac{{Re}(\widehat{F}(\mathbf{v}))}{3}\right): u, r \in \mathbb{F}_3^*, \mathbf{v} \in \mathbb{F}_3^m, F\in \mathcal{F} \right\}\right\} \bigcup \left\{\left\{3^m-3^{m-1}: u =r=0, \mathbf{v} \in{\mathbb{F}_3^m}^\ast\right\}\right\} \cup\{\{0\}\} .
\end{eqnarray*}
\end{thm}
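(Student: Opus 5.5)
The plan is to compute the Hamming weight of each codeword $\mathbf{c}_{f,g}(\mathbf{v})$ by the standard character-sum method, and then read off the dimension from the weights.

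\textbf{Weight formula.} For $(u,r,\mathbf{v})\in\mathbb{F}_3\times\mathbb{F}_3\times\mathbb{F}_3^m$ put $h(\mathbf{x})=uf(\mathbf{x})+rg(\mathbf{x})+\mathbf{v}\cdot\mathbf{x}$. Since $h(\mathbf{0})=0$, the number of zero coordinates over ${\mathbb{F}_3^m}^\ast$ equals $|\{\mathbf{x}\in\mathbb{F}_3^m:h(\mathbf{x})=0\}|-1$. Orthogonality of additive characters then gives
\[
\mathrm{wt}(\mathbf{c})=3^m-\frac{1}{3}\sum_{y\in\mathbb{F}_3}\sum_{\mathbf{x}\in\mathbb{F}_3^m}\zeta_3^{y h(\mathbf{x})}
=3^m-3^{m-1}-\frac13\sum_{y\in\mathbb{F}_3^\ast}\sum_{\mathbf{x}}\zeta_3^{yuf(\mathbf{x})+yrg(\mathbf{x})+y\mathbf{v}\cdot\mathbf{x}}.
\]

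\textbf{Case split on $(u,r)$.} If $u=r=0$, the inner sum vanishes for $\mathbf{v}\neq\mathbf{0}$, which gives weight $3^m-3^{m-1}$; for $\mathbf{v}=\mathbf{0}$ it gives weight $0$.

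Otherwise $(u,r)\neq(0,0)$, and $uf+rg$ is, up to a nonzero scalar $\epsilon\in\{\pm1\}$, one of $F\in\mathcal{F}=\{f,g,f+g,f-g\}$. Explicitly:
\begin{itemize}
\item $(u,0)$ gives $\pm f$;
\item $(0,r)$ gives $\pm g$;
\item $u=r$ gives $\pm(f+g)$;
\item $u=-r$ gives $\pm(f-g)$.
\end{itemize}
These eight pairs map two-to-one onto $\mathcal{F}$. Writing $uf+rg=\epsilon F$, the two terms $y=1,2$ are complex conjugates, because $\zeta_3^{2a}=\overline{\zeta_3^{a}}$. After replacing $\mathbf{v}$ by $-\epsilon\mathbf{v}$ (a bijection of $\mathbb{F}_3^m$), the sum equals $\widehat{F}(\mathbf{v})+\overline{\widehat{F}(\mathbf{v})}=2\,\mathrm{Re}(\widehat F(\mathbf{v}))$. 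Hence
\[
\mathrm{wt}=2\Bigl(3^{m-1}-\tfrac{\mathrm{Re}(\widehat F(\mathbf{v}))}{3}\Bigr),
\]
and the multiset of these weights, as $(u,r)$ ranges over the nonzero pairs and $\mathbf{v}$ over $\mathbb{F}_3^m$, is exactly the first union in the statement (each $F$ counted twice).

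\textbf{Dimension.} The code is the image of the $\mathbb{F}_3$-linear map $(u,r,\mathbf{v})\mapsto\mathbf{c}$, so it suffices to show the kernel is trivial. If $(u,r)\neq(0,0)$, a zero codeword would mean $\epsilon F(\mathbf{x})=-\mathbf{v}\cdot\mathbf{x}$ on ${\mathbb{F}_3^m}^\ast$, and hence on all of $\mathbb{F}_3^m$ because $F(\mathbf{0})=0$. This means $F(\mathbf{x})=\mathbf{w}\cdot\mathbf{x}$ with $\mathbf{w}=-\epsilon\mathbf{v}$, which the third hypothesis on $\mathcal{F}$ forbids. If $u=r=0$ and $\mathbf{v}\neq\mathbf{0}$, the weight is $3^m-3^{m-1}>0$. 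So the kernel is $\{\mathbf{0}\}$, the dimension is $m+2$, and the length is clearly $3^m-1$.

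\textbf{Main obstacle.} The only delicate point is the bookkeeping showing that the nonzero pairs $(u,r)$ reduce to the four functions in $\mathcal{F}$ with the correct sign handling. One must check that $\mathrm{Re}\,\widehat{-F}(\mathbf{v})=\mathrm{Re}\,\widehat F(-\mathbf{v})$, so that the multiset indexing by $\mathbf{v}$ is preserved.
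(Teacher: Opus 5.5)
Your proposal is correct and follows essentially the same route as the paper. Both use the character-sum weight formula $\mathrm{wt}(\mathbf{c})=3^m-3^{m-1}-\frac{2}{3}\mathrm{Re}(\cdot)$, reduce the eight nonzero pairs $(u,r)$ to $\pm F$ with $F\in\mathcal{F}$ (the paper's observation $\{uf+rg\}\setminus\{0\}=\mathbb{F}_3^\ast\times\mathcal{F}$), and invoke the hypothesis $F(\mathbf{x})\neq\mathbf{w}\cdot\mathbf{x}$ to exclude nonzero $(u,r,\mathbf{v})$ giving the zero codeword. Your kernel argument is a direct version of the paper's criterion $\mathrm{Re}(\widehat{F_1}(\mathbf{w}))\neq 3^m$, and your count (each $F$ appearing twice, with $\mathbf{w}=-\epsilon\mathbf{v}$ running bijectively over $\mathbb{F}_3^m$) is the correct reading of the multiset, which the statement indexes loosely as $u,r\in\mathbb{F}_3^\ast$.
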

\textbf{Proof}. In terms of exponential sums, the Hamming weight $wt(\mathbf{c})$ of any codeword $\mathbf{c}= (u f(x)+r g(x)+v \cdot x)_{x \in {\mathbb{F}_3^m}^\ast}$ of $\mathcal{C}_{f,g}$  can be calculated as

\begin{eqnarray*}
wt(\mathbf{c}) &=&\sharp\left\{\mathbf{x} \in {\mathbb{F}_3^m}^\ast: u f(\mathbf{x})+ r g(\mathbf{x})+\mathbf{v} \cdot \mathbf{x} \neq 0 \right\} \\
&=&\left(3^m-1\right)-\frac{1}{3} \sum_{y \in \mathbb{F}_3} \sum_{x \in {\mathbb{F}_3^m}^\ast} \zeta_3^{y(u f(\mathbf{x})+ r g(\mathbf{x})+\mathbf{v} \cdot \mathbf{x})} \\
&=&\left(3^m-1\right)+1-3^{m-1}-\frac{1}{3} \sum_{y \in {\mathbb{F}_3}^\ast} \sum_{x \in \mathbb{F}_3^m} \zeta_3^{y(u f(\mathbf{x})+ r g(\mathbf{x})+\mathbf{v} \cdot \mathbf{x})}\\
&=&3^m-3^{m-1}-\frac{1}{3}\left(\sum_{x \in \mathbb{F}_3^m} \zeta_3^{u f(\mathbf{x})+ r g(\mathbf{x})+\mathbf{v} \cdot \mathbf{x}}+\sum_{x \in \mathbb{F}_3^m} \zeta_3^{-u f(\mathbf{x})-r g(\mathbf{x})-\mathbf{v} \cdot \mathbf{x}}\right) \\
&=&3^m-3^{m-1}-\frac{2}{3} \textrm{Re}\left(\sum_{x \in \mathbb{F}_3^m} \zeta_3^{u f(\mathbf{x})+ r g(\mathbf{x})+\mathbf{v} \cdot\mathbf{ x}}\right) .
\end{eqnarray*}
We examine the possible values of $wt(\mathbf{c})$ by considering the following cases.

$\bullet$ If $u=0$, $r=0$ and $\mathbf{v}=\mathbf{0}$, then $wt(\mathbf{c})=0$.

$\bullet$ If $u=0$, $r=0$ and $\mathbf{v }\neq \mathbf{0}$, then $wt(\mathbf{c})=3^m-3^{m-1}$.

$\bullet$ If $u=1$, $r=0$ and $\mathbf{v} \neq \mathbf{0}$, then $wt(\mathbf{c})=3^m-3^{m-1}-\frac{2}{3} \operatorname{Re}(\widehat{f}(-\mathbf{v}))$.

$\bullet$ If $u=-1$, $r=0$ and $\mathbf{v} \neq \mathbf{0}$, then $wt(\mathbf{c})=3^m-3^{m-1}-\frac{2}{3} \operatorname{Re}(\widehat{f}(\mathbf{v}))$.

$\bullet$ If $u = 0$, $r=1$ and $\mathbf{v} \neq \mathbf{0}$, then $wt(\mathbf{c})=3^m-3^{m-1}-\frac{2}{3} \operatorname{Re}(\widehat{g}(-\mathbf{v}))$.

$\bullet$ If $u=0$, $r=-1$ and $\mathbf{v} \neq \mathbf{0}$, then $wt(\mathbf{c})=3^m-3^{m-1}-\frac{2}{3} \operatorname{Re}(\widehat{g}(\mathbf{v}))$.

$\bullet$ If $u = 1$, $r = 1$ and $\mathbf{v} \neq \mathbf{0}$, then $wt(\mathbf{c})=3^m-3^{m-1}-\frac{2}{3} \operatorname{Re}(\widehat{f+g}(-\mathbf{v}))$.

$\bullet$ If $u=-1$, $r=-1$ and $\mathbf{v} \neq \mathbf{0}$, then $wt(\mathbf{c})=3^m-3^{m-1}-\frac{2}{3} \operatorname{Re}(\widehat{f+g}(\mathbf{v}))$.

$\bullet$ If $u=1$, $r=-1$ and $\mathbf{v} \neq \mathbf{0}$, then $wt(\mathbf{c})=3^m-3^{m-1}-\frac{2}{3} \operatorname{Re}(\widehat{f-g}(-\mathbf{v}))$.

$\bullet$ If $u=-1$, $r=1$ and $\mathbf{v} \neq \mathbf{0}$, then $wt(\mathbf{c})=3^m-3^{m-1}-\frac{2}{3} \operatorname{Re}(\widehat{f-g}(\mathbf{v}))$.

$\bullet$ If $u, r \in \mathbb{F}_3^*$ and $\mathbf{v} = \mathbf{0}$, then $wt(\mathbf{c})=3^m-3^{m-1}-\frac{2}{3} \operatorname{Re}(\widehat{F}(\mathbf{0}))$, where $F\in \mathcal{F}$.\\
The weight distribution of $\mathcal{C}_{f, g}$ follows from the above discussions.

Note that the dimension of $\mathcal{C}_{f, g}$ is $m+2$ if and only if
\[Re(\widehat{F_1}(\mathbf{w}))\neq 3^m \quad \text{for any }\quad \mathbf{w}\in \mathbb{F}_3^m\quad \text{and}\quad F_1\in\{uf+rg|u,r\in\mathbb{F}_3\}\backslash\{0\},\]
where
\[Re(\widehat{F_1}(\mathbf{w}))=Re(\sum_{\mathbf{x}\in\mathbb{F}_3^m}\zeta_p^{F_1(\mathbf{x})-\mathbf{w}\cdot \mathbf{x}})=\sum_{\mathbf{x}\in\mathbb{F}_3^m}Re(\zeta_p^{F_1(\mathbf{x})-\mathbf{w}\cdot \mathbf{x}}).\]
Thus, the condition $Re(\widehat{F_1}(\mathbf{w}))=3^m$ is equivalent to requiring that $F_1(\mathbf{w})=\mathbf{x}\cdot\mathbf{w}$ for all $\mathbf{x}\in\mathbb{F}_3^m$. From the fact $\{uf+rg|u,r\in\mathbb{F}_3\}\backslash\{0\}=\mathbb{F}_3^\ast\times \mathcal{F}$, together with hypothesis that $F(\mathbf{x})\neq \mathbf{w}\cdot \mathbf{x}$ for any $F\in \mathcal{F}$ and $\mathbf{w}\in\mathbb{F}_3^m$, it follows that the code $\mathcal{C}_{f, g}$ has dimension $m+2$.

This completes the proof of Theorem \ref{thm1}.

A natural question is when the code $\mathcal{C}_{f,g}$ is minimal. The next theorem gives a necessary and sufficient condition for $\mathcal{C}_{f,g}$ is minimal in term of the Walsh transform of $\mathcal{F}$.
\begin{thm}\label{thm2}
Let $\mathcal{C}_{f, g}$ be the ternary code of Theorem \ref{thm1}, denote $\mathcal{F}=\{f, g, f+g,f-g\}$, then $\mathcal{C}_{f, g}$ is minimal if and only if the following two conditions hold simultaneously.

(1) For any $F\in \mathcal{F}$, we have
\[
\operatorname{Re}\left(\widehat{F}\left(\mathbf{v}_1\right)\right)+ \operatorname{Re}\left(\widehat{F}\left(\mathbf{v}_2\right)\right)-2\operatorname{Re}\left(\widehat{F}\left(\mathbf{v}_3\right)\right)\neq3^m
\quad \text{and} \quad
\operatorname{Re}\left(\widehat{F}\left(\mathbf{v}_1\right)\right)+\operatorname{Re}\left(\widehat{F}\left(\mathbf{v}_2\right)\right)+
\operatorname{Re}\left(\widehat{F}\left(\mathbf{v}_3\right)\right)\neq3^m,
\]
where $\mathbf{v}_1, \mathbf{v}_2, \mathbf{v}_3\in \mathbb{F}_3^m$ are distinct vectors satisfying $\mathbf{v}_1+\mathbf{v}_2+\mathbf{v}_3=\mathbf{0}$;

(2) for $F_1,F_2\in \mathcal{F}$ with $F_1\neq F_2$, we have
\[
\operatorname{Re}\left(\widehat{F_1+F_2}\left(\mathbf{v}_1+\mathbf{v}_2\right)\right)+\operatorname{Re}\left(\widehat{F_2-F_1}\left(\mathbf{v}_1
-\mathbf{v}_2\right)\right)-2 \operatorname{Re}\left(\widehat{F_1}\left(\mathbf{v}_1\right)\right)+\operatorname{Re}\left(\widehat{ F_2}\left(\mathbf{v}_2\right)\right)\neq3^m,
\]
where $\mathbf{v}_1, \mathbf{v}_2\in \mathbb{F}_3^m$ .
\end{thm}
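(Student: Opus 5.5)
We plan to use Lemma \ref{lem2} with $p=3$: a codeword $\mathbf{a}$ covers $\mathbf{b}$ if and only if $\mathrm{wt}(\mathbf{a}+\mathbf{b})+\mathrm{wt}(\mathbf{a}-\mathbf{b})=2\mathrm{wt}(\mathbf{a})-\mathrm{wt}(\mathbf{b})$. Thus $\mathcal{C}_{f,g}$ is minimal if and only if this identity fails for every pair of linearly independent codewords $\mathbf{a},\mathbf{b}$. We write $\mathbf{a}=\mathbf{c}(u_1,r_1,\mathbf{w}_1)$ and $\mathbf{b}=\mathbf{c}(u_2,r_2,\mathbf{w}_2)$. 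By the computation in the proof of Theorem \ref{thm1}, every weight has the form $3^m-3^{m-1}-\frac{2}{3}\mathrm{Re}(S)$, where $S=\sum_{\mathbf{x}}\zeta_3^{uf+rg+\mathbf{v}\cdot\mathbf{x}}$. Substituting into the covering identity, the constant terms combine as $(2-2+1)(3^m-3^{m-1})$ on the right and $2(3^m-3^{m-1})$ on the left. After multiplying by $-\frac{3}{2}$, the covering condition becomes
\[\mathrm{Re}\,S(\mathbf{a}+\mathbf{b})+\mathrm{Re}\,S(\mathbf{a}-\mathbf{b})-2\mathrm{Re}\,S(\mathbf{a})+\mathrm{Re}\,S(\mathbf{b})=3^m.\]
Here $S(\mathbf{c})$ is the exponential sum attached to the codeword $\mathbf{c}$, and $S=3^m$ exactly for the zero function with $\mathbf{v}=\mathbf{0}$. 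Minimality is therefore equivalent to this sum differing from $3^m$ for all independent pairs. The rest of the proof sorts the pairs by the span of their $(u,r)$-parts in $\mathbb{F}_3^2$.

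\textbf{Case A: $(u_1,r_1)$ and $(u_2,r_2)$ both lie on one line through the origin.} Since $\{uf+rg\}\setminus\{0\}=\mathbb{F}_3^\ast\mathcal{F}$, we may replace $\mathbf{a},\mathbf{b}$ by suitable scalar multiples or by combinations within the pair, which does not change the covering relation up to equivalence. Subcase A1: one of $\mathbf{a},\mathbf{b}$ has $(u,r)=(0,0)$. Subcase A2: both parts are nonzero multiples of $(1,0)$, $(0,1)$, $(1,1)$ or $(1,-1)$, that is, they correspond to the same $F$. Using $\mathrm{Re}\,\widehat{F}(-\mathbf{v})$ versus $\mathrm{Re}\,\widehat{-F}(\mathbf{v})$ (conjugation) and relabeling, the four sums reduce to values of $\mathrm{Re}\,\widehat{F}$ at three vectors with $\mathbf{v}_1+\mathbf{v}_2+\mathbf{v}_3=\mathbf{0}$. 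If $\mathbf{b}$ is purely linear, then $S(\mathbf{b})=0$ and the coefficients become $1,1,-2$. If both involve $F$ with the same sign, then $\mathbf{a}-\mathbf{b}$ is linear and the coefficients become $1,1,1$. These correspond to the two alternatives in condition (1). The requirement that the vectors be distinct is exactly the requirement that $\mathbf{a},\mathbf{b}$ be independent.

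\textbf{Case B: the two $(u,r)$-parts are independent.} Then they correspond to distinct $F_1,F_2\in\mathcal{F}$ with $\mathbf{a}\leftrightarrow F_1$ and $\mathbf{b}\leftrightarrow F_2$. Using the relabeling freedom $\mathbf{b}\to-\mathbf{b}$, the combinations $\mathbf{a}\pm\mathbf{b}$ give $F_1+F_2$ and $F_1-F_2$, up to sign, which is absorbed by conjugation. The identity then becomes exactly the expression in condition (2), with $\mathbf{v}_1,\mathbf{v}_2$ arbitrary.

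The main obstacle is the bookkeeping. We must verify that every sign pattern of $(u_i,r_i)$ reduces, via $\mathrm{Re}\,\widehat{F}(\mathbf{v})=\mathrm{Re}\,\widehat{-F}(-\mathbf{v})$ and the sign conventions of the Walsh transform, to exactly the stated forms. We must also check that, in Case B, the possibility that $F_1\pm F_2$ falls outside $\mathcal{F}$ up to sign is harmless, since $\widehat{\cdot}$ is defined for any function. Both directions then follow at once, because each step is an equivalence.
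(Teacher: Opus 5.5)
Your framework is the paper's: Lemma~\ref{lem2} together with the weight formula of Theorem~\ref{thm1}. Your master identity $\mathrm{Re}\,S(\mathbf{a}+\mathbf{b})+\mathrm{Re}\,S(\mathbf{a}-\mathbf{b})-2\,\mathrm{Re}\,S(\mathbf{a})+\mathrm{Re}\,S(\mathbf{b})=3^m$ for $\mathbf{b}\preceq\mathbf{a}$ is also correct. The gaps are in the case bookkeeping, which is where the content of the theorem lies.

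\textbf{Case A.} You attribute the $1,1,1$ alternative to two words carrying the same $F$ with the same sign. That is false. In that situation $\mathbf{a}-\mathbf{b}$ is a nonzero linear word, so $\mathrm{Re}\,S(\mathbf{a}-\mathbf{b})=0$, and the surviving coefficients are $1,-2,1$ on $\mathbf{a}+\mathbf{b},\mathbf{a},\mathbf{b}$. That is the first alternative again, as in the paper's Case 2(3). Equivalently, $\mathbf{b}\preceq\mathbf{a}\iff\mathbf{b}-\mathbf{a}\preceq\mathbf{a}$ reduces it to your ``$\mathbf{b}$ linear'' subcase.

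The $1,1,1$ alternative arises only when the \emph{covering} word is linear: $\mathbf{a}=\mathbf{s}_{\mathbf{w}}$ with $\mathbf{w}\neq\mathbf{0}$, and $\mathbf{b}=\epsilon F+\mathbf{s}_{\mathbf{w}'}$. Then the $-2\,\mathrm{Re}\,S(\mathbf{a})$ term vanishes, and you get $\mathrm{Re}\,\widehat{F}$ evaluated at $-\epsilon(\mathbf{w}+\mathbf{w}')$, $\epsilon(\mathbf{w}-\mathbf{w}')$ and $-\epsilon\mathbf{w}'$. These three vectors sum to $\mathbf{0}$ and are distinct exactly because $\mathbf{w}\neq\mathbf{0}$.

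Your Subcase A1 treats only $\mathbf{b}$ linear, so as written neither direction is established for pairs in which a linear word covers an $F$-word. You must add this subcase; it is the paper's ``$\mathbf{c}_2\preceq\mathbf{c}_1$'' with $\mathbf{c}_1=\mathbf{s}_a$.

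\textbf{Case B.} Conjugation gives $\mathrm{Re}\,\widehat{-G}(\mathbf{w})=\mathrm{Re}\,\widehat{G}(-\mathbf{w})$. It negates the argument together with the function, so it cannot flip only the sign of $F_1-F_2$. Write $\mathbf{a}=\epsilon_1F_1+\mathbf{s}_{\mathbf{w}_1}$, $\mathbf{b}=\epsilon_2F_2+\mathbf{s}_{\mathbf{w}_2}$ and $\mathbf{v}_i=-\epsilon_i\mathbf{w}_i$. Then the identity becomes, independently of $\epsilon_1\epsilon_2$,
\[\mathrm{Re}\,\widehat{F_1+F_2}(\mathbf{v}_1+\mathbf{v}_2)+\mathrm{Re}\,\widehat{F_1-F_2}(\mathbf{v}_1-\mathbf{v}_2)-2\,\mathrm{Re}\,\widehat{F_1}(\mathbf{v}_1)+\mathrm{Re}\,\widehat{F_2}(\mathbf{v}_2)=3^m.\]
This is what the paper's own computation in Case 2(4) produces, and it is the form used in the Case~1 reduction of Lemma~\ref{lemma8}.

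The displayed condition (2) has $\mathrm{Re}\,\widehat{F_2-F_1}(\mathbf{v}_1-\mathbf{v}_2)=\mathrm{Re}\,\widehat{F_1-F_2}(\mathbf{v}_2-\mathbf{v}_1)$ instead. The two agree only when $\mathrm{Re}\,\widehat{F_1-F_2}$ is even. That holds for the weight-symmetric $f,g$ of Section~4, but not in general.

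So you cannot claim to land ``exactly'' on (2). State the condition with $F_1-F_2$, and flag the $F_2-F_1$ in the statement as a misprint.

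Separately, your worry that $F_1\pm F_2$ might fall outside $\pm\mathcal{F}$ is moot. The $(u,r)$-parts are independent, so $F_1\pm F_2$ is always a nonzero element of $\pm\mathcal{F}$.
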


\textbf{Proof}. Let the vectors $ \mathbf{f} \text { and } \mathbf{g} $ be defined respectively as \\
\begin{eqnarray*}
\mathbf{f}=(f(\mathbf{x}))_{\mathbf{x} \in \mathbb{F}_3^{m *}}, \quad \text { and } \quad \mathbf{g}=(g(\mathbf{x}))_{\mathbf{x} \in \mathbb{F}_3^{m *}} .
\end{eqnarray*}
Define the one-weight subcode as
$$
S_m=\left\{s_v  \mid \mathbf{v} \in \mathbb{F}_3^m\right\},
$$
where $s_v=\left(\mathbf{v} \cdot \mathbf{x}\right)_{\mathbf{x} \in {\mathbb{F}_3^m}^\ast} .$
which is obviously minimal because all its nonzero codewords have the same weight $3^m-3^{m-1}$.\\
Every codeword of $\mathcal{C}_{f, g}$ can be written uniquely as
$$
\mathbf{c}(u, r, v)=u \mathbf{f}+r \mathbf{g}+s_a, \quad u, r \in \mathbb{F}_3, \mathbf{a} \in \mathbb{F}_3^m.
$$
We next consider the coverage of codewords in $\mathcal{C}_{f, g}$ by distinguishing
the following two cases.

{\bf Case 1}. Two different codewords $\mathbf{c}_1$ and $\mathbf{c}_2$ come from the same type
\[S = \left\{\mathbf{s}_a, \mathbf{f}+\mathbf{s}_a,-\mathbf{f}+\mathbf{s}_a, \mathbf{g}+\mathbf{s}_a, -\mathbf{g}+\mathbf{s}_a, \mathbf{f}+\mathbf{g}+\mathbf{s}_a,-\mathbf{f}-\mathbf{g}+\mathbf{s}_a,\mathbf{f}-\mathbf{g}+\mathbf{s}_a,-\mathbf{f}+\mathbf{g}+\mathbf{s}_a\right\},\]
then we have the following two subcases.

(1) For $\mathbf{c}_i=\mathbf{s}_a$ and $\mathbf{c}_j \in S \setminus \{\mathbf{s}_a\}$, where $\{i, j\}=\{1,2\}$, we only give proof of $\mathbf{c}_1=\mathbf{s}_a$ and $\mathbf{c}_2=\mathbf{f}+\mathbf{g}+\mathbf{s}_a$, omit the proofs of other cases, whose proofs are similar to the case $\mathbf{c}_1=\mathbf{s}_a$ and $\mathbf{c}_2=\mathbf{f}+\mathbf{g}+\mathbf{s}_a$. By Lemma \ref{lem2} and Theorem \ref{thm1}, one can get
\begin{eqnarray*}
\mathbf{c}_2 \preceq \mathbf{c}_1 & \Longleftrightarrow &\mathrm{wt}(\mathbf{c}_1+\mathbf{c}_2)+\mathrm{wt}(\mathbf{c}_1-\mathbf{c}_2)=2 \mathrm{wt}(\mathbf{c}_1)-\mathrm{wt}(\mathbf{c}_2) \\
&\Longleftrightarrow&\left(3^m-3^{m-1}-\frac{2}{3} \operatorname{Re}\left(\widehat{f+g}\left(\mathbf{a}\right)\right)\right)+\left(3^m-3^{m-1}-\frac{2}{3} \operatorname{Re}\left(\widehat{f+g}\left(\mathbf{0}\right)\right)\right) \\
& =&2\left(3^m-3^{m-1}\right)-\left(3^m-3^{m-1}-\frac{2}{3} \operatorname{Re}\left(\widehat{f+g}\left(\mathbf{-a}\right)\right)\right) \\
& \Longleftrightarrow& \operatorname{Re}\left(\widehat{f+g}\left(\mathbf{a}\right)\right)+ \operatorname{Re}\left(\widehat{f+g}\left(\mathbf{0}\right)\right)+\operatorname{Re}\left(\widehat{f+g}\left(\mathbf{-a}\right)\right)=3^m.
\end{eqnarray*}

 Similarly, we have
 \begin{eqnarray*}
\mathbf{c}_1 \preceq \mathbf{c}_2 \Longleftrightarrow \operatorname{Re}\left(\widehat{f+g}\left(\mathbf{a}\right)\right)-2 \operatorname{Re}\left(\widehat{f+g}\left(-\mathbf{a}\right)\right)+\operatorname{Re}\left(\widehat{f+g}\left(\mathbf{0}\right)\right)=3^m .
\end{eqnarray*}

(2) Both $\mathbf{c}_1$ and $\mathbf{c}_2$ belong to $S \setminus \{\mathbf{s}_a\}$, we only give the proof of $\mathbf{c}_1=\mathbf{f}+\mathbf{s}_a$ and $\mathbf{c}_2=\mathbf{g}+\mathbf{s}_a$, omit the proofs of other cases, whose proofs are similar to this case. By Lemma \ref{lem2} and Theorem \ref{thm1}, we have

\begin{eqnarray*}
\mathbf{c}_2 \preceq \mathbf{c}_1 & \Longleftrightarrow & \mathrm{wt}(\mathbf{c}_1+\mathbf{c}_2)+\mathrm{wt}(\mathbf{c}_1-\mathbf{c}_2)=2 \mathrm{wt}(\mathbf{c}_1)-\mathrm{wt}(\mathbf{c}_2) \\
& \Longleftrightarrow& \left(3^m-3^{m-1}-\frac{2}{3} \operatorname{Re}\left(\widehat{f+g}\left(\mathbf{a}\right)\right)\right)+\left(3^m-3^{m-1}-\frac{2}{3} \operatorname{Re}\left(\widehat{f-g}\left(\mathbf{0}\right)\right)\right)\\
& =&2\left(3^m-3^{m-1}-\frac{2}{3} \operatorname{Re}\left(\widehat{f}\left(\mathbf{-a}\right)\right)\right)-\left(3^m-3^{m-1}-\frac{2}{3} \operatorname{Re}\left(\widehat{g}\left(\mathbf{-a}\right)\right)\right) \\
& \Longleftrightarrow& \operatorname{Re}\left(\widehat{f+g}\left(\mathbf{a}\right)\right)+\operatorname{Re}\left(\widehat{f-g}\left(\mathbf{0}\right)\right)-2 \operatorname{Re}\left(\widehat{f}\left(\mathbf{-a}\right)\right)+\operatorname{Re}\left(\widehat{g}\left(\mathbf{-a}\right)\right)=3^m,
\end{eqnarray*}
and
\begin{eqnarray*}
\mathbf{c}_1 \preceq \mathbf{c}_2 \Longleftrightarrow \operatorname{Re}\left(\widehat{f+g}\left(\mathbf{a}\right)\right)+\operatorname{Re}\left(\widehat{f-g}\left(\mathbf{0}\right)\right)-2 \operatorname{Re}\left(\widehat{g}\left(\mathbf{-a}\right)\right)+\operatorname{Re}\left(\widehat{f}\left(\mathbf{-a}\right)\right)=3^m .
\end{eqnarray*}

{\bf Case 2}. Two codewords $\mathbf{c}_1$ and $\mathbf{c}_2$ come from different types, namely,
$$
\mathbf{c}_1 \in\left\{\mathbf{s}_a, \mathbf{f}+\mathbf{s}_a,-\mathbf{f}+\mathbf{s}_a, \mathbf{g}+\mathbf{s}_a, -\mathbf{g}+\mathbf{s}_a, \mathbf{f}+\mathbf{g}+\mathbf{s}_a,-\mathbf{f}-\mathbf{g}+\mathbf{s}_a,\mathbf{f}-\mathbf{g}+\mathbf{s}_a,-\mathbf{f}+\mathbf{g}+\mathbf{s}_a\right\}
$$
and
$$
\quad \mathbf{c}_2 \in\left\{\mathbf{s}_b, \mathbf{f}+\mathbf{s}_b,-\mathbf{f}+\mathbf{s}_b, \mathbf{g}+\mathbf{s}_b,-\mathbf{g}+\mathbf{s}_b, \mathbf{f}+\mathbf{g}+\mathbf{s}_b,-\mathbf{f}-\mathbf{g}+\mathbf{s}_b,\mathbf{f}-\mathbf{g}+\mathbf{s}_b,-\mathbf{f}+\mathbf{g}+\mathbf{s}_b\right\},
$$
where both $\mathbf{a}$ and $\mathbf{b}$ are different vectors of $\mathbb{F}_3^m$. Then we have the following four subcases.

(1) For $\mathbf{c}_1=\mathbf{s}_a$ and $\mathbf{c}_2=\mathbf{s}_b$. Let $\mathbf{c}_1$ and $\mathbf{c}_2$ be two linearly independent codewords in $\mathcal{S}_m$, then one cannot cover the other as they are one-weight code of $S_m$.

(2) For $\mathbf{c}_1=\mathbf{s}_\alpha$ and $\mathbf{c}_2 \in\left\{\mathbf{f}+\mathbf{s}_\beta, \mathbf{g}+\mathbf{s}_\beta, \mathbf{f}+\mathbf{g}+\mathbf{s}_\beta,\mathbf{f}-\mathbf{g}+\mathbf{s}_\beta,-\mathbf{f}+\mathbf{s}_\beta, -\mathbf{g}+\mathbf{s}_\beta, -\mathbf{f}-\mathbf{g}+\mathbf{s}_\beta,-\mathbf{f}+\mathbf{g}+\mathbf{s}_\beta\right\}$, where $\{\alpha, \beta\}=\{\mathbf{a}, \mathbf{b}\}$.
we only give the proof of $\mathbf{c}_1=\mathbf{s}_a$ and $\mathbf{c}_2=\mathbf{f}+\mathbf{g}+\mathbf{s}_b$, omit the proofs of other cases, whose proofs are similar to this case. By Lemma \ref{lem2} and Theorem \ref{thm1}, we have
\begin{eqnarray*}
\mathbf{c}_2 \preceq \mathbf{c}_1 & \Longleftrightarrow & \mathrm{wt}(\mathbf{c}_1+\mathbf{c}_2)+\mathrm{wt}(\mathbf{c}_1-\mathbf{c}_2)=2 \mathrm{wt}(\mathbf{c}_1)-\mathrm{wt}(\mathbf{c}_2) \\
&\Longleftrightarrow &\left(3^m-3^{m-1}-\frac{2}{3} \operatorname{Re}\left(\widehat{f+g}\left(-\mathbf{a}-\mathbf{b}\right)\right)\right)+\left(3^m-3^{m-1}-\frac{2}{3} \operatorname{Re}\left(\widehat{f+g}\left(\mathbf{a}-\mathbf{b}\right)\right)\right) \\
& = &2\left(3^m-3^{m-1}\right)-\left(3^m-3^{m-1}-\frac{2}{3} \operatorname{Re}\left(\widehat{f+g}\left(-\mathbf{b}\right)\right)\right) \\
& \Longleftrightarrow & \operatorname{Re}\left(\widehat{f+g}\left(-\mathbf{a}-\mathbf{b}\right)\right)+ \operatorname{Re}\left(\widehat{f+g}\left(\mathbf{a}-\mathbf{b}\right)\right)+\operatorname{Re}\left(\widehat{f+g}\left(-\mathbf{b}\right)\right)=3^m.
\end{eqnarray*}

 Similarly, we have
 \begin{eqnarray*}
\mathbf{c}_1 \preceq \mathbf{c}_2 \Longleftrightarrow \operatorname{Re}\left(\widehat{f+g}\left(-\mathbf{a}-\mathbf{b}\right)\right)-2 \operatorname{Re}\left(\widehat{f+g}\left(-\mathbf{b}\right)\right)+\operatorname{Re}\left(\widehat{f+g}\left(\mathbf{a}-\mathbf{b}\right)\right)=3^m .
\end{eqnarray*}

(3) For $\mathbf{c}_1=\mathbf{f}_1+\mathbf{s}_a$ and $\mathbf{c}_2=\mathbf{f}_1+\mathbf{s}_b$, where $\mathbf{f}_1 \in\{\mathbf{f}, \mathbf{g}, \mathbf{f}+\mathbf{g},\mathbf{f}-\mathbf{g}\ ,-\mathbf{f}, -\mathbf{g}, -\mathbf{f}-\mathbf{g},-\mathbf{f}+\mathbf{g}\}$, we only give the proof of $\mathbf{c}_1=\mathbf{f}+\mathbf{s}_a$ and $\mathbf{c}_2=\mathbf{f}+\mathbf{s}_b$, omit the proofs of other cases, whose proofs are similar to this case. Note that $\mathbf{s}_b+\mathbf{s}_a=\mathbf{s}_{a+b} \in S_m$, by Lemma \ref{lem2} and Theorem \ref{thm1}, we can obtain
\begin{eqnarray*}
\mathbf{c}_2 \preceq \mathbf{c}_1 & \Longleftrightarrow & \mathrm{wt}(\mathbf{c}_1+\mathbf{c}_2)+\mathrm{wt}(\mathbf{c}_1-\mathbf{c}_2)=2 \mathrm{wt}(\mathbf{c}_1)-\mathrm{wt}(\mathbf{c}_2) \\
& \Longleftrightarrow & \left(3^m-3^{m-1}-\frac{2}{3} \operatorname{Re}\left(\widehat{f}\left(\mathbf{a}+\mathbf{b}\right)\right)\right)+\left(3^m-3^{m-1}\right) \\
& =&2\left(3^m-3^{m-1}-\frac{2}{3} \operatorname{Re}\left(\widehat{f}\left(-\mathbf{a}\right)\right)\right)-\left(3^m-3^{m-1}-\frac{2}{3} \operatorname{Re}\left(\widehat{f}\left(-\mathbf{b}\right)\right)\right) \\
& \Longleftrightarrow & \operatorname{Re}\left(\widehat{f}\left(\mathbf{a}+\mathbf{b}\right)\right)-2 \operatorname{Re}\left(\widehat{f}\left(-\mathbf{a}\right)\right)+\operatorname{Re}\left(\widehat{f}\left(-\mathbf{b}\right)\right)=3^m ,
\end{eqnarray*}
and
\begin{eqnarray*}
\mathbf{c}_1 \preceq \mathbf{c}_2 \Longleftrightarrow \operatorname{Re}\left(\widehat{f}\left(\mathbf{a}+\mathbf{b}\right)\right)-2 \operatorname{Re}\left(\widehat{f}\left(-\mathbf{b}\right)\right)+\operatorname{Re}\left(\widehat{f}\left(-\mathbf{a}\right)\right)=3^m .
\end{eqnarray*}

(4) For $\mathbf{c}_1=\mathbf{f}_1+\mathbf{s}_a$ and $\mathbf{c}_2=\mathbf{f}_2+\mathbf{s}_b$, where both $\mathbf{f}_1$ and $\mathbf{f}_2$ belong to $\{\mathbf{f}, \mathbf{g}, \mathbf{f}+\mathbf{g},\mathbf{f}-\mathbf{g}\ ,-\mathbf{f}, -\mathbf{g}, -\mathbf{f}-\mathbf{g},-\mathbf{f}+\mathbf{g}\}$ with $\mathbf{f}_1 \neq \mathbf{f}_2$, we only give the proof of $\mathbf{c}_1=\mathbf{f}+\mathbf{s}_a$ and $\mathbf{c}_2=\mathbf{g}+\mathbf{s}_b$, omit the proofs of other cases, whose proofs are similar to this case. By Lemma \ref{lem2} and Theorem \ref{thm1}, we have
\begin{eqnarray*}
\mathbf{c}_2 \preceq \mathbf{c}_1 & \Longleftrightarrow & \mathrm{wt}(\mathbf{c}_1+\mathbf{c}_2)+\mathrm{wt}(\mathbf{c}_1-\mathbf{c}_2)=2 \mathrm{wt}(\mathbf{c}_1)-\mathrm{wt}(\mathbf{c}_2) \\
& \Longleftrightarrow &\left(3^m-3^{m-1}-\frac{2}{3} \operatorname{Re}\left(\widehat{f+g}\left(-\mathbf{a}-\mathbf{b}\right)\right)\right)+\left(3^m-3^{m-1}-\frac{2}{3} \operatorname{Re}\left(\widehat{f-g}\left(-\mathbf{a}+\mathbf{b}\right)\right)\right)\\
& = &2\left(3^m-3^{m-1}-\frac{2}{3} \operatorname{Re}\left(\widehat{f}\left(-\mathbf{a}\right)\right)\right)-\left(3^m-3^{m-1}-\frac{2}{3} \operatorname{Re}\left(\widehat{g}\left(-\mathbf{b}\right)\right)\right) \\
& \Longleftrightarrow & \operatorname{Re}\left(\widehat{f+g}\left(-\mathbf{a}-\mathbf{b}\right)\right)+\operatorname{Re}\left(\widehat{f-g}\left(-\mathbf{a}+\mathbf{b}\right)\right)-2 \operatorname{Re}\left(\widehat{f}\left(-\mathbf{a}\right)\right)+\operatorname{Re}\left(\widehat{g}\left(-\mathbf{b}\right)\right)=3^m ,
\end{eqnarray*}
and
\begin{eqnarray*}
\mathbf{c}_1 \preceq \mathbf{c}_2 \Longleftrightarrow \operatorname{Re}\left(\widehat{f+g}\left(-\mathbf{a}-\mathbf{b}\right)\right)+\operatorname{Re}\left(\widehat{g-f}\left(-\mathbf{b}+\mathbf{a}\right)\right)-2 \operatorname{Re}\left(\widehat{g}\left(-\mathbf{b}\right)\right)+\operatorname{Re}\left(\widehat{f}\left(-\mathbf{a}\right)\right)=3^m .
\end{eqnarray*}

Since $\operatorname{Re} (\zeta_3)=\operatorname{Re} (\zeta_3^{-1})$, then for any $F\in \mathcal{F}=\{\mathbf{f}, \mathbf{g}, \mathbf{f}+\mathbf{g}, \mathbf{f}-\mathbf{g}\}$ and $\mathbf{v}\in{\mathbb{F}_3^m}$, we have $\operatorname{Re} (\widehat{-F}(\mathbf{v}))=\operatorname{Re}(\widehat{F}(\mathbf{-v}))$. Thus, Theorem \ref{thm2} only involves the four functions in $\mathcal{F}=\{\mathbf{f}, \mathbf{g}, \mathbf{f}+\mathbf{g}, \mathbf{f}-\mathbf{g}\}$.

Consequently, this completes the proof of Theorem \ref{thm2}.

\section{A class of minimal ternary linear codes from the generic construction}
In this section, using the general construction (\ref{shizi2}), we shall present a class of minimal ternary codes with $w_{\min } / w_{\max } \leq 2 / 3$ from a class of specific functions. Let $m \geq 9$ be an integer and $k_1, k_2$ be positive integers such that $2 \leq k_1<k_1+1<k_2 \leq\left\lfloor\frac{m-1}{2}\right\rfloor$. Denote the Hamming weight of a vector $\mathbf{x} \in \mathbb{F}_3^m$ by $\mathrm{wt}(\mathbf{x})$, we introduce four subsets of ${\mathbb{F}_3^m}^\ast$ determined by weight constraints,
\begin{eqnarray*}
& A&=\left\{\mathbf{x} \in {\mathbb{F}_3^m}^\ast: 1 \leq \operatorname{wt}(\mathbf{x}) \leq k_1-1\right\},
B=\left\{\mathbf{x} \in {\mathbb{F}_3^m}^\ast: \operatorname{wt}(\mathbf{x}) = k_1\right\}, \\
&C&=\left\{\mathbf{x} \in {\mathbb{F}_3^m}^\ast: k_1 < \operatorname{wt}(\mathbf{x}) \leq k_2-1\right\},
D=\left\{\mathbf{x} \in {\mathbb{F}_3^m}^\ast: \operatorname{wt}(\mathbf{x}) = k_2\right\} .
\end{eqnarray*}
It is immediate that
$$
\left|A\right|=\sum_{j=1}^{k_1-1} 2^j\binom{m}{j}, \quad\left|C\right|=\sum_{j=k_1+1}^{k_2-1} 2^j\binom{m}{j} .
$$
Based on these subsets, we define two characteristic functions $f$ and $g: \mathbb{F}_3^m \rightarrow \mathbb{F}_3$ , respectively,
$$
f(\mathbf{x})=\left\{\begin{array}{ll}
1, & \text { if } \mathbf{x} \in A\cup C\cup D, \\
0, & \text { otherwise },
\end{array} \ \text{and} \quad   g(\mathbf{x})= \begin{cases}1, & \text { if } \mathbf{x} \in B\cup C, \\
-1, & \text { if } \mathbf{x} \in  D, \\
0, & \text { otherwise } .\end{cases}\right.
$$
Based on $f$ and $g$, the general construction (\ref{shizi2}) is used to obtain the ternary linear code $\mathcal{C}_{f, g}$, and the parameters and weight distribution of $\mathcal{C}_{f, g}$ are given as follows.
\begin{thm}\label{thm3}
Let $m, k_1, k_2$ be integers with $m \geq 9$ and $2 \leq k_1<k_1+1<k_2 \leq\left\lfloor\frac{m-1}{2}\right\rfloor$. Then the linear code $\mathcal{C}_{f,g}$  has parameters
\[\left[3^m-1, m+2, \sum_{j=1}^{k_2-1} 2^j\binom{m}{j}\right],\]
the weight distribution is given by Table 1, and the complete weight enumerator is
\begin{eqnarray*}
 & w_0^{3^m-1}
   +\left(3^m-1\right) w_0^{3^{m-1}-1} w_1^{3^{m-1}} w_2^{3^{m-1}}
   +w_0^{b+e} w_1^{a+c+d}+w_0^{b+e} w_2^{a+c+d}
   +w_0^{a+e} w_1^{b+c} w_2^d+w_0^{a+e} w_1^d w_2^{b+c} \\
 & +w_0^{d+e} w_1^{a+b} w_2^c+w_0^{d+e} w_1^c w_2^{a+b}
   +w_0^{c+e} w_1^a w_2^{b+d}+w_0^{c+e} w_1^{b+d} w_2^a \\
 & +\sum_{i=1}^m 2^{i}\binom{m}{i}
   \left[ w_0^{3^{m-1}-1-(\beta+\gamma+\delta)} w_1^{3^{m-1}+\beta+\gamma} w_2^{3^{m-1}+\delta}
        +w_0^{3^{m-1}-1-(\beta+\gamma+\delta)} w_1^{3^{m-1}+\delta} w_2^{3^{m-1}+\beta+\gamma} \right. \\
 & \qquad + w_0^{3^{m-1}-1-(\alpha+\gamma+\delta)} w_1^{3^{m-1}+\alpha+\gamma+\delta} w_2^{3^{m-1}}
        +w_0^{3^{m-1}-1-(\alpha+\gamma+\delta)} w_1^{3^{m-1}} w_2^{3^{m-1}+\alpha+\gamma+\delta} \\
 & \qquad + w_0^{3^{m-1}-1-(\alpha+\beta+\gamma)} w_1^{3^{m-1}+\alpha+\beta} w_2^{3^{m-1}+\gamma}
        +w_0^{3^{m-1}-1-(\alpha+\beta+\gamma)} w_1^{3^{m-1}+\gamma} w_2^{3^{m-1}+\alpha+\beta} \\
 & \qquad \left. + w_0^{3^{m-1}-1-(\alpha+\beta+\delta)} w_1^{3^{m-1}+\alpha} w_2^{3^{m-1}+\beta+\delta}
        +w_0^{3^{m-1}-1-(\alpha+\beta+\delta)} w_1^{3^{m-1}+\beta+\delta} w_2^{3^{m-1}+\alpha} \right],
\end{eqnarray*}
where
\[
\begin{aligned}
& a=\sum_{j=1}^{k_1-1} 2^j\binom{m}{j}, \quad b=2^{k_1}\binom{m}{k_1}, \quad c=\sum_{j=k_1+1}^{k_2-1} 2^j\binom{m}{j}, \\
& d=2^{k_2}\binom{m}{k_2}, \quad e=3^m-1-(a+b+c+d) .
\end{aligned}
\]

\[
\begin{gathered}
\alpha=\Psi_{k_1-1}(i, m)-1, \quad \beta=\Psi_{k_1}(i, m)-\Psi_{k_1-1}(i, m), \\
\gamma=\Psi_{k_2-1}(i, m)-\Psi_{k_1}(i, m), \quad \delta=\Psi_{k_2}(i, m)-\Psi_{k_2-1}(i, m) .
\end{gathered}
\]

\begin{center}
Table 1: the weight distribution of $\mathcal{C}_{f,g}$  \\
\begin{tabular}{ll}\hline
Weight $w$ & No. of codewords $A_w$ \\\hline
$0$&$1$\\
$3^m-3^{m-1}$ & $3^m-1$ \\
$\sum_{j=1}^{k_1-1} 2^j\binom{m}{j}+\sum_{j=k_1+1}^{k_2} 2^j\binom{m}{j}$ & 2 \\
$\sum_{j=k_1}^{k_2} 2^j\binom{m}{j}$ & 2 \\
$\sum_{j=1}^{k_2-1} 2^j\binom{m}{j}$ &2 \\
$\sum_{j=1}^{k_1} 2^j\binom{m}{j}+2^{k_2}\binom{m}{k_2}$ &2\\
$3^m-3^{m-1}+\Psi_{k_2}(i, m)-\Psi_{k_1}(i, m)+\Psi_{k_1-1}(i, m)-1$ &$2^{i+1}\binom{m}{i},1 \leq i \leq m$\\
$3^m-3^{m-1}+\Psi_{k_2}(i, m)-\Psi_{k_1-1}(i, m)$&$2^{i+1}\binom{m}{i},1 \leq i \leq m$\\
$3^m-3^{m-1}+\Psi_{k_2-1}(i, m)-1$&$2^{i+1}\binom{m}{i},1 \leq i \leq m$\\
$3^m-3^{m-1}+\Psi_{k_1}(i, m)+\Psi_{k_2}(i, m)-\Psi_{k_2-1}(i, m)-1 $&$2^{i+1}\binom{m}{i},1 \leq i \leq m$\\
\hline
\end{tabular}
\end{center}
\end{thm}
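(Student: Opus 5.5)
The plan is to compute, for every codeword $u\mathbf{f}+r\mathbf{g}+\mathbf{s}_{\mathbf{v}}$, the full distribution of its $3^m-1$ coordinates over $\{0,1,2\}$. Both the complete weight enumerator and Table~1 (which is just the column sums of the $w_1,w_2$ exponents) then follow from these counts. I will organise the computation by the decomposition of ${\mathbb{F}_3^m}^\ast$ into $A,B,C,D$ and the complement $E$, with $|E|=e$.

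On these pieces $f$ takes the values $(1,0,1,1,0)$ and $g$ takes $(0,1,1,-1,0)$. So the four functions in $\mathcal{F}$ are all constant on each piece:
\begin{align*}
f&=(1,0,1,1,0), & g&=(0,1,1,-1,0),\\
f+g&=(1,1,-1,0,0), & f-g&=(1,-1,0,-1,0).
\end{align*}

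\textbf{Case $\mathbf{v}=\mathbf{0}$.} The codewords $\pm F$ with $F\in\mathcal{F}$ are read off directly. For example, $f$ has $w_1$-exponent $a+c+d$ and $w_0$-exponent $b+e$, and $f+g$ has $w_1^{a+b}w_2^{c}w_0^{d+e}$. This gives the eight monomials without $\mathbf{v}$, and hence the four weights in Table~1 of multiplicity $2$. The codewords $\mathbf{s}_{\mathbf{v}}$ with $\mathbf{v}\neq\mathbf{0}$ are balanced and give the term $(3^m-1)w_0^{3^{m-1}-1}w_1^{3^{m-1}}w_2^{3^{m-1}}$.

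\textbf{Case $\mathbf{v}\neq\mathbf{0}$, $wt(\mathbf{v})=i$.} Fix a nonzero pattern $F$ with values $\epsilon_P\in\mathbb{F}_3$ on the pieces $P$. For each $c\in\mathbb{F}_3$ I count
\[N_c=\#\{\mathbf{x}\in{\mathbb{F}_3^m}^\ast : F(\mathbf{x})+\mathbf{v}\cdot\mathbf{x}=c\}.\]
This splits as a sum over pieces of $\#\{\mathbf{x}\in P:\mathbf{v}\cdot\mathbf{x}=c-\epsilon_P\}$. Write $n_P(t)=\#\{\mathbf{x}\in P:\mathbf{v}\cdot\mathbf{x}=t\}$. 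By orthogonality,
\[n_P(t)=\tfrac13\Big(|P|+2\operatorname{Re}\big(\zeta_3^{-t}\textstyle\sum_{\mathbf{x}\in P}\zeta_3^{\mathbf{v}\cdot\mathbf{x}}\big)\Big).\]
By Lemma~\ref{lem3}(4), $\sum_{\mathbf{x}\in P}\zeta_3^{\mathbf{v}\cdot\mathbf{x}}$ is a real sum of Krawtchouk values $K_t(i,m)$ over the weights in $P$. Summing, these become
\[\alpha=\Psi_{k_1-1}(i,m)-1\ \text{on }A,\quad \beta\ \text{on }B,\quad \gamma\ \text{on }C,\quad \delta\ \text{on }D,\]
where the $-1$ removes $\mathbf{x}=\mathbf{0}$. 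Hence
\[n_P(0)=\tfrac13(|P|+2\sigma_P),\qquad n_P(\pm1)=\tfrac13(|P|-\sigma_P),\]
with $\sigma_P\in\{\alpha,\beta,\gamma,\delta\}$.

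The counts on $E$ do not need to be computed separately. They are obtained from the balance of $\mathbf{s}_{\mathbf{v}}$ on all of ${\mathbb{F}_3^m}^\ast$, which has $3^{m-1}-1$ zeros and $3^{m-1}$ of each nonzero value. Shifting piece $P$ by $\epsilon_P\neq0$ moves $\sigma_P$ from the $0$-class to the $(-\epsilon_P)$-class relative to the balanced distribution. This yields
\[N_0=3^{m-1}-1-\sum_{\epsilon_P\neq0}\sigma_P,\qquad N_1=3^{m-1}+\sum_{\epsilon_P=-1}\sigma_P,\qquad N_2=3^{m-1}+\sum_{\epsilon_P=1}\sigma_P.\]
Up to the $1\leftrightarrow2$ swap coming from $\pm F$, this is exactly the shape of the bracketed monomials in the statement. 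Running through $F=f+g,\,f,\,f-g,\,g$ and their negatives produces the four pairs. Counting the vectors $\mathbf{v}$ of weight $i$ gives the factor $2^i\binom mi$, and the two signs give the $2^{i+1}$ in Table~1.

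To get the weights I add $N_1+N_2$. For example, for $f+g$ this gives $3^m-3^{m-1}+\alpha+\beta+\gamma+\delta$ once signs are tracked, and it telescopes to a combination of $\Psi$'s. The dimension is $m+2$ by Theorem~\ref{thm1}, since no function in $\mathcal{F}$ is linear: each is constant on weight classes and is nonzero at some point, so it cannot equal $\mathbf{w}\cdot\mathbf{x}$.

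\textbf{Main obstacle: the minimum distance.} I must show that the smallest of the four $\mathbf{v}=\mathbf{0}$ weights, namely $\sum_{j=1}^{k_2-1}2^j\binom mj$, lies below every $\mathbf{v}\neq\mathbf{0}$ weight. For this I will use Lemma~\ref{lem3}(3):
\[|\Psi_k(i,m)|\le 2^k\binom{m-1}{k}.\]
Each $\mathbf{v}\neq\mathbf{0}$ weight is therefore at least
\[3^m-3^{m-1}-c\,2^{k_2}\binom{m-1}{k_2}\]
for a small constant $c\le4$. The hypotheses $k_2\le\lfloor(m-1)/2\rfloor$ and $m\ge9$ should make this exceed $\sum_{j\le k_2}2^j\binom mj\le 3^{m}/2$ or so. The delicate part is verifying this inequality uniformly; I would bound $\sum_{j\le (m-1)/2}2^j\binom mj$ by a geometric-tail estimate against $3^m$.
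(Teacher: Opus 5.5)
Your plan is essentially the paper's proof. The paper also uses that $f,g$ are constant on $A,B,C,D,E$, evaluates the piecewise character sums with Lemma~\ref{lem3}(4) to get $\alpha,\beta,\gamma,\delta$, and removes the balanced contribution of $\mathbf{s}_{\mathbf{v}}$ (its $(\zeta_3-1)\sum_{A\cup C\cup D}$ step is your shift bookkeeping) before reading off Tables 2 and 3.

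There are two slips to fix:
\begin{itemize}
\item Shifting piece $P$ by $\epsilon_P$ moves $\sigma_P$ from the $0$-class into the $\epsilon_P$-class, not the $(-\epsilon_P)$-class. For $f$ this gives $N_1=3^{m-1}+\alpha+\gamma+\delta$ and $N_2=3^{m-1}$, as in Table 3. The slip is harmless, since $\pm F$ both occur.
\item $f+g$ vanishes on $D$, so its weight is $2\cdot 3^{m-1}+\alpha+\beta+\gamma=3^m-3^{m-1}+\Psi_{k_2-1}(i,m)-1$, with no $\delta$.
\end{itemize}

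The minimum distance is the one step the paper does not actually prove. It cites Lemma~\ref{lemma4}, which gives $2^{k_2}\binom{m}{k_2}>\sum_{j=1}^{k_2-1}2^j\binom{m}{j}$. That orders the four $\mathbf{v}=\mathbf{0}$ weights, and you should cite it for your unproved claim that $\sum_{j=1}^{k_2-1}2^j\binom{m}{j}$ is the smallest of them. But the paper never compares these with the $\mathbf{v}\neq\mathbf{0}$ weights, so your extra step is genuinely needed.

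The benchmark you name does not close that step. At $m=9$, $k_2=4$ your lower bound is $2\cdot 3^8-4\cdot 2^4\binom{8}{4}=8642$, which is less than $3^9/2$.

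A clean fix has two parts:
\begin{itemize}
\item In each $S_F$ the indices are distinct: $k_2,k_1,k_1-1,0$ for $f$; $k_2,k_1-1$ for $g$; $k_2-1,0$ for $f+g$; $k_2,k_2-1,k_1,0$ for $f-g$. The last case uses $k_1+1<k_2$. So by Lemma~\ref{lem3}(3), $|S_F|$ is bounded by a sum of distinct terms of $\sum_{k=0}^{m-1}2^k\binom{m-1}{k}=3^{m-1}$ that omits $k=m-1$. Hence every $\mathbf{v}\neq\mathbf{0}$ weight exceeds $3^{m-1}$.
\item For $j\le k_2-1\le (m-3)/2$ one has $2^j\binom{m}{j}=2^{-(m-2j)}\,2^{m-j}\binom{m}{m-j}\le \frac{1}{8}\,2^{m-j}\binom{m}{m-j}$. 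Summing distinct terms of $3^m$ gives $\sum_{j=1}^{k_2-1}2^j\binom{m}{j}<3^m/8<3^{m-1}$.
\end{itemize}
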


\textbf{Proof}. For $F\in \mathcal{F}$, according to the choices of $F$, we divide into four cases to obtain the weight distribution of $\mathcal{C}_{f,g} $. We only give the details for $F=f+g$, namely
$$
\begin{aligned}
\widehat{{f}+{g}}(\mathbf{w})&=\sum_{\mathbf{x} \in \mathbb{F}_3^m} \zeta_3^{f(\mathbf{x})+g(\mathbf{x})-\mathbf{w} \cdot \mathbf{x}}\\
&=\sum_{\mathbf{x} \in A}\zeta_3^{1-\mathbf{w} \cdot \mathbf{x}}+\sum_{\mathbf{x} \in B} \zeta_3^{1-\mathbf{w} \cdot \mathbf{x}}+\sum_{\mathbf{x} \in C} \zeta_3^{2-\mathbf{w} \cdot \mathbf{x}}+\sum_{\mathbf{x} \in D} \zeta_3^{-\mathbf{w} \cdot \mathbf{x}}+\sum_{\mathbf{x} \in \mathbb{F}_3^m \backslash A\cup B\cup C\cup D} \zeta_3^{-\mathbf{w}\cdot \mathbf{x}}\\
&=\left(\zeta_3-1\right) \sum_{\mathbf{x} \in A} \zeta_3^{-\mathbf{w} \cdot \mathbf{x}}+\left(\zeta_3-1\right) \sum_{\mathbf{x} \in B} \zeta_3^{-\mathbf{w}\cdot \mathbf{x}}+\left(\zeta_3^{2}-1\right) \sum_{\mathbf{x} \in C} \zeta_3^{-\mathbf{w} \cdot \mathbf{x}}+\sum_{\mathbf{x} \in \mathbb{F}_3^m } \zeta_3^{-\mathbf{w} \cdot \mathbf{x}}.
\end{aligned}
$$
For $\mathbf{w} \in \mathbb{F}_3^m$, if $\mathbf{w }= \mathbf{0}$, then
$$
\widehat{{f}+{g}}(\mathbf{0})=3^m+\left(\zeta_3-1\right) (\sum_{j=1}^{k_1} 2^j\binom{m}{j})+\left(\zeta_3^{2}-1\right) (\sum_{j=k_1+1}^{k_2-1} 2^j\binom{m}{j}),
$$
and
$$
\operatorname{Re}\left(\widehat{{f}+{g}}(\mathbf{0})\right)=3^m-\frac{3}2{}(\sum_{j=1}^{k_2-1} 2^j\binom{m}{j}),
$$
if $\mathbf{w}\neq \mathbf{0}$ and $wt(\mathbf{w})=i $, then by Lemma \ref{lem3}, we have
$$
\widehat{{f}+{g}}(\mathbf{w})=\left(\zeta_3-1\right)\left(\Psi_{k_1} (i, m)-1\right)+\left(\zeta_3^{2}-1\right)\left(\Psi_{k_2-1} (i, m)-\Psi_{k_1} (i, m)\right),
$$
thus,
$$
\operatorname{Re}\left(\widehat{{f}+{g}}(\mathbf{w})\right)=-\frac{3}{2}\Psi_{k_2-1} (i, m) + \frac{3}{2}.
$$
For other cases, we can obtain similarly
$$
\operatorname{Re}\left(\widehat{f}(\mathbf{w})\right)= \begin{cases}3^m-\frac{3}{2} \left(\sum_{j=1}^{k_2} 2^j\binom{m}{j}-2^{k_1}\binom{m}{k_1}\right), & \text { if } \mathbf{w}=\mathbf{0}, \\ -\frac{3}{2}\left(\Psi_{k_2} (i, m)+\Psi_{k_1-1} (i, m)-\Psi_{k_1} (i, m)-1\right), & \text { if } wt(\mathbf{w})=i>0,\end{cases}
$$

$$
\operatorname{Re}\left(\widehat{g}(\mathbf{w})\right)= \begin{cases}3^m-\frac{3}{2}\left(\sum_{j=k_1 }^{k_2} 2^j\binom{m}{j}\right), & \text { if }\mathbf{w}=\mathbf{0}, \\ -\frac{3}{2}\left(\Psi_{k_2} (i, m)-\Psi_{k_1-1} (i, m)\right), & \text { if } wt(\mathbf{w})=i>0,\end{cases}
$$

$$
\operatorname{Re}\left(\widehat{{f}-{g}}(\mathbf{w})\right)= \begin{cases}3^m-\frac{3}2{}(\sum_{j=1}^{k_1} 2^j\binom{m}{j})+2^{k_2}\binom{m}{k_2}), & \text { if }\mathbf{w}=\mathbf{0}, \\ -\frac{3}{2}(\Psi_{k_2} (i, m)+\Psi_{k_1} (i, m)-\Psi_{k_2-1} (i, m)) + \frac{3}{2}, & \text { if } wt(\mathbf{w})=i>0.\end{cases}
$$
Then the weight distribution of $\mathcal{C}_{f, g}$ follows from Theorem \ref{thm1}.

Next, we determine the complete weight enumerator of $\mathcal{C}_{f, g}$.
For any $\lambda, r,u \in \mathbb{F}_3$, and $\mathbf{v}\in \mathbb{F}_3^m$, denote
$$
N_{\lambda(u, r, \mathbf{v})}=\left\{\mathbf{x} \in {\mathbb{F}_3^m}^\ast \mid u f(\mathbf{x})+ r g(\mathbf{x})+\mathbf{v} \cdot \mathbf{x}=\lambda\right\},
$$
then we have

\begin{eqnarray}\label{shizi3}
N_\lambda(u, r, \mathbf{v}) & = & \frac{1}{3} \sum_{y \in \mathbb{F}_3} \sum_{\mathbf{x} \in \mathbb{F}_3^{m*}} \zeta_3^{y(u f(\mathbf{x})+r g(\mathbf{x})+\mathbf{v} \cdot \mathbf{x}-\lambda)} \nonumber \\
& = & 3^{m-1}-\frac{1}{3}+\frac{1}{3} \zeta_3^{-\lambda} \sum_{\mathbf{x} \in \mathbb{F}_3^{m*}} \zeta_3^{u f(\mathbf{x})+r g(\mathbf{x})+\mathbf{v} \cdot \mathbf{x}}+\frac{1}{3} \zeta_3^{\lambda} \sum_{\mathbf{x} \in \mathbb{F}_3^{m*}} \zeta_3^{-u f(\mathbf{x})-r g(\mathbf{x})-\mathbf{v} \cdot \mathbf{x}} \nonumber \\
& = & 3^{m-1}-\frac{1}{3}+\frac{1}{3} \zeta_3^{-\lambda} \left( \sum_{\mathbf{x} \in A} \zeta_3^{u+\mathbf{v} \cdot \mathbf{x}} + \sum_{\mathbf{x} \in B} \zeta_3^{r+\mathbf{v} \cdot \mathbf{x}} + \sum_{\mathbf{x} \in C} \zeta_3^{u+r+\mathbf{v} \cdot \mathbf{x}} + \sum_{\mathbf{x} \in D} \zeta_3^{u-r+\mathbf{v} \cdot \mathbf{x}} + \sum_{\mathbf{x} \in E} \zeta_3^{\mathbf{v} \cdot \mathbf{x}} \right) \nonumber\\
& & \qquad + \frac{1}{3} \zeta_3^{\lambda} \left( \sum_{\mathbf{x} \in A} \zeta_3^{-u-\mathbf{v} \cdot \mathbf{x}} + \sum_{\mathbf{x} \in B} \zeta_3^{-r-\mathbf{v} \cdot \mathbf{x}}
+ \sum_{\mathbf{x} \in C} \zeta_3^{-u-r-\mathbf{v} \cdot \mathbf{x}} + \sum_{\mathbf{x} \in D} \zeta_3^{-u+r-\mathbf{v} \cdot \mathbf{x}} + \sum_{\mathbf{x} \in E} \zeta_3^{-\mathbf{v} \cdot \mathbf{x}} \right),
\end{eqnarray}
where $E={\mathbb{F}_3^m}^\ast\backslash A\cup B\cup C\cup D$. Now we calculate $|N_{\lambda(u,r,\mathbf{v})}|$ according to the following three cases.

{\bf Case 1}. $u=r=0$. In this case, the codeword $\mathbf{c}=(\mathbf{v}\cdot \mathbf{x})_{\mathbf{x}\in{\mathbb{F}_3^m}^\ast}$, by (\ref{shizi3}), we have
\begin{eqnarray*}
N_{\lambda}(0,0,\mathbf{v}) = 3^{m-1}-\frac{1}{3}
+ \frac{1}{3}\zeta_3^{-\lambda}\sum_{x \in {\mathbb{F}_3^m}^\ast}\zeta_3^{\mathbf{v}\cdot x}
+ \frac{1}{3}\zeta_3^{\lambda}\sum_{x \in {\mathbb{F}_3^m }^\ast}\zeta_3^{-\mathbf{v}\cdot x}
= \left\{
\begin{array}{ll}
3^m-1, & \text{if } \mathbf{v}=\mathbf{0},\ \lambda=0;\\
3^{m-1}-1, & \text{if } \mathbf{v}\neq\mathbf{0},\ \lambda=0;\\
3^{m-1}, & \text{if } \mathbf{v}\neq\mathbf{0},\ \lambda\neq0.
\end{array}
\right.
\end{eqnarray*}

{\bf Case 2}. $\mathbf{v}=\mathbf{0}$ and $(u, r) \neq(0,0)$. The codeword reduces to $\mathbf{c}(u, r, \mathbf{0})=(u f(\mathbf{x})+r g(\mathbf{x}))_{\mathbf{x} \in \mathbb{F}_3^{m *}}$. For example, if $(u, r)=(0,1)$, then $\mathbf{c}(0, 1, \mathbf{0})=(r g(\mathbf{x}))_{\mathbf{x} \in \mathbb{F}_3^{m *}}$, by (\ref{shizi3}), we have
\begin{eqnarray*}
N_\lambda(0,1,\mathbf{0}) &=& 3^{m-1}-\frac{1}{3} \\
&& +\frac{1}{3}\left[(a+e)\left(\zeta_3^{-\lambda}+\zeta_3^\lambda\right)+(b+c)\left(\zeta_3^{1-\lambda}+\zeta_3^{-1+\lambda}\right)+d\left(\zeta_3^{-1-\lambda}+\zeta_3^{1+\lambda}\right)\right] \\
&=&\left\{\begin{array}{ll}a+e, & \text{if } \lambda=0; \\
b+c, & \text{if } \lambda=1; \\
d, & \text{if } \lambda=2. \end{array}\right.
\end{eqnarray*}
The remaining cases are obtained analogously, and the complete weight enumerator of $\mathcal{C}_{f, g}$ in this case is listed in Table 2.
\begin{center}
Table 2: the complete weight enumerator of $\mathcal{C}_{f, g}$ for Case 2\\
\begin{tabular}{|l|l|l|l|}
\hline ( $u, r$ ) & $N_0$ & $N_1$ & $N_2$ \\
\hline $(1,0)$ & $b+e$ & $a+c+d$ & 0 \\
\hline $(2,0)$ & $b+e$ & 0 & $a+c+d$ \\
\hline $(0,1)$ & $a+e$ & $b+c$ & $d$ \\
\hline $(0,2)$ & $a+e$ & $d$ & $b+c$ \\
\hline $(1,1)$ & $d+e$ & $a+b$ & $c$ \\
\hline $(2,2)$ & $d+e$ & $c$ & $a+b$ \\
\hline $(1,2)$ & $c+e$ & $a$ & $b+d$ \\
\hline $(2,1)$ & $c+e$ & $b+d$ & $a$ \\
\hline
\end{tabular}
\end{center}

{\bf Case 3}. $\mathbf{v} \neq \mathbf{0}$ and $(u, r) \neq(0,0)$. For a given $\mathbf{v}\neq \mathbf{0}$, we denote $\operatorname{wt}(\mathbf{v})=i\geq 1$, then $\mathbf{c}(u, r, \mathbf{v})=(uf(\mathbf{x})+ rg(\mathbf{x})+\mathbf{x}\cdot \mathbf{v})_{\mathbf{x} \in \mathbb{F}_3^{m *}}$, according to value of $(r,u)$, we divide into eight subcases, we only give details for
$(u, r)=(1,0)$, and omit the details of other cases, whose calculation is similar to this case. By (\ref{shizi3}), we have

\begin{eqnarray*}
N_\lambda(1,0,\mathbf{v})
&=& 3^{m-1}-\frac{1}{3}
 +\frac{1}{3} \zeta_3^{-\lambda}\left(\sum_{x \in A \cup C \cup D} \zeta_3^{1+v \cdot x}+\sum_{x \in B \cup E} \zeta_3^{v \cdot x}\right)
+\frac{1}{3} \zeta_3^\lambda\left(\sum_{x \in A \cup C \cup D} \zeta_3^{-1-v \cdot x}+\sum_{x \in B \cup E} \zeta_3^{-v \cdot x}\right) \\
&=&3^{m-1}-\frac{1}{3}
 +\frac{1}{3} \zeta_3^{-\lambda}\left(\sum_{x \in A \cup C \cup D} (\zeta_3^{1}-1)\zeta_3^{v \cdot x}+\sum_{x \in {\mathbb{F}_3^m}^\ast} \zeta_3^{v \cdot x}\right)
+\frac{1}{3} \zeta_3^\lambda\left(\sum_{x \in A \cup C \cup D} (\zeta_3^{-1}-1)\zeta_3^{-v \cdot x}+\sum_{x \in {\mathbb{F}_3^m}^\ast} \zeta_3^{-v \cdot x}\right) \\
&=&\left\{\begin{array}{ll} 3^{m-1}-1-(\alpha+\gamma+\delta), &\text{if } \lambda=0;\\
3^{m-1}+(\alpha+\gamma+\delta), &\text{if } \lambda=1;\\
3^{m-1}, &\text{if } \lambda=2. \end{array}\right.
\end{eqnarray*}
Then, for $\operatorname{wt}(\mathbf{v})=i$, the complete weight enumerator of $\mathcal{C}_{f, g}$ in this case is listed in Table 3.
\begin{center}
Table 3: the complete weight enumerator of $\mathcal{C}_{f, g}$ for Case 3\\
\begin{tabular}{|l|l|l|l|}
\hline
$(u, r)$ & $N_0$ & $N_1$ & $N_2$ \\
\hline
$(0,1)$ & $3^{m-1}-1-(\beta+\gamma+\delta)$ & $3^{m-1}+\beta+\gamma$ & $3^{m-1}+\delta$ \\
\hline
$(0,2)$ & $3^{m-1}-1-(\beta+\gamma+\delta)$ & $3^{m-1}+\delta$ & $3^{m-1}+\beta+\gamma$ \\
\hline
$(1,0)$ & $3^{m-1}-1-(\alpha+\gamma+\delta)$ & $3^{m-1}+\alpha+\gamma+\delta$ & $3^{m-1}$ \\
\hline
$(2,0)$ & $3^{m-1}-1-(\alpha+\gamma+\delta)$ & $3^{m-1}$ & $3^{m-1}+\alpha+\gamma+\delta$ \\
\hline
$(1,1)$ & $3^{m-1}-1-(\alpha+\beta+\gamma)$ & $3^{m-1}+\alpha+\beta$ & $3^{m-1}+\gamma$ \\
\hline
$(2,2)$ & $3^{m-1}-1-(\alpha+\beta+\gamma)$ & $3^{m-1}+\gamma$ & $3^{m-1}+\alpha+\beta$ \\
\hline
$(1,2)$ & $3^{m-1}-1-(\alpha+\beta+\delta)$ & $3^{m-1}+\alpha$ & $3^{m-1}+\beta+\delta$ \\
\hline
$(2,1)$ & $3^{m-1}-1-(\alpha+\beta+\delta)$ & $3^{m-1}+\beta+\delta$ & $3^{m-1}+\alpha$ \\
\hline
\end{tabular}
\end{center}
Based on the discussions above, the complete weight enumerator of  $\mathcal{C}_{f, g}$ is immediate.

This completes the proof of Theorem \ref{thm3}.

\noindent Next, based on Theorem \ref{thm2}, we will show the code $\mathcal{C}_{f, g}$ in Theorem \ref{thm3} is minimal. We need some preparations first.
\begin{lemma}\label{lemma4}
Let $m, k$ be integers with $m \geq 9$ and $2 \leq k \leq\left\lfloor\frac{m-1}{2}\right\rfloor$, define
\[a_j=2^j\binom{m}{j}, \quad j=1,2, \ldots,m,\]
then, we have
\[a_k>\sum_{j=1}^{k-1} a_j.\]
\end{lemma}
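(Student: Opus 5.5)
We compare consecutive terms through their ratio. For $1\le j\le m-1$,
\[
\frac{a_{j+1}}{a_j}=\frac{2^{j+1}\binom{m}{j+1}}{2^j\binom{m}{j}}=\frac{2(m-j)}{j+1}.
\]
The plan is to show this ratio is at least $2$ throughout the range that matters. The terms then grow at least geometrically with ratio $2$, and a geometric-series comparison gives the claim.

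First we check the ratio bound. For $j\le k-1$ we have $j+1\le k\le \lfloor\frac{m-1}{2}\rfloor\le\frac{m-1}{2}$, so $m-j\ge m-k+1\ge \frac{m+3}{2}$. Hence
\[
\frac{2(m-j)}{j+1}\ge \frac{m+3}{(m-1)/2}=\frac{2(m+3)}{m-1}>2.
\]
So $a_{j+1}>2a_j$ for every $1\le j\le k-1$.

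Next we iterate this bound. For each $1\le j\le k-1$ it gives $a_j<2^{-(k-j)}a_k$. Summing over $j$,
\[
\sum_{j=1}^{k-1}a_j<a_k\sum_{t=1}^{k-1}2^{-t}=a_k\left(1-2^{-(k-1)}\right)<a_k.
\]
This is the claimed inequality.

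No step here is a genuine obstacle; the only care needed is in the ratio bound. There we must use $k\le\lfloor\frac{m-1}{2}\rfloor$ to keep $j+1$ small and $m-j$ large at the same time. The hypothesis $m\ge 9$ is only needed so that admissible $k\ge2$ exist; the inequality holds for all $2\le k\le\lfloor\frac{m-1}{2}\rfloor$. As a sanity check we can verify a small instance directly, say $m=9$, $k=4$: then $a_4=16\cdot126=2016$, while $a_1+a_2+a_3=18+144+672=834$.
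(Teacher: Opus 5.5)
Your proof is correct and is essentially the paper's argument: both rest on the doubling bound $a_{j+1}\ge 2a_j$ (equivalently $\binom{m}{j+1}\ge\binom{m}{j}$) for $j+1\le\lfloor\frac{m-1}{2}\rfloor$. The paper sums via induction on $k$, using $a_{k+1}\ge 2a_k>a_k+\sum_{j=1}^{k-1}a_j$. You instead iterate the bound and sum the geometric series directly, which is the same idea in a different form.
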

$\bf{Proof}$. We proceed the proof by induction on $k$. When $i=2$, we have $a_2=4\binom{m}{2}=2 m(m-1)$, $\sum_{j=1}^1 a_j=a_1=2 m$,
then the above inequality holds. Assume that for some $i=k > 2$, the inequality $a_k>\sum_{j=1}^{k-1} a_j$
holds. Consider $i={k+1}$ with $k \leq\left\lfloor\frac{m-1}{2}\right\rfloor$, we have
\[2^{k+1}\binom{m}{k+1}\geq 2^{k+1}\binom{m}{k}=2\times2^{k}\binom{m}{k}\]
which implies $a_{k+1} \geq 2 a_k$. Combining this with the inductive hypothesis, we obtain $a_{k+1} \geq 2 a_k>a_k+\sum_{j=1}^{k-1} a_j=\sum_{j=1}^k a_j$.
Thus, $a_{k+1}>\sum_{j=1}^k a_j$.

This completes the proof of Lemma \ref{lemma4}.\\
The following lemma will be used to prove the minimality of the linear code $\mathcal{C}_{f,g}$.
\begin{lemma}\cite{Heng-Ding-Zhou}\label{lamma5}
Let $m, k$ be integers with  $m \geq 5 $ and $ 2 \leq k \leq\left\lfloor\frac{m-1}{2}\right\rfloor $. Then we have
$$
\sum_{j=1}^k 2^j\binom{m}{j} \neq-2\left(\Psi_k (i, m)-1\right) \text { for all } 1 \leq i \leq m.
$$
\end{lemma}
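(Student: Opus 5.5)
The plan is to turn the statement into a comparison between two values of the same Lloyd polynomial, and then use a sign argument for $i=1$ and a size estimate for $i\ge 2$.

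\textbf{Reformulation.} By item 2 of Lemma \ref{lem3}, $K_t(0,m)=2^t\binom{m}{t}$. Hence
\[
\sum_{j=1}^k 2^j\binom{m}{j}=\Psi_k(0,m)-1 .
\]
So the claim is equivalent to
\[
\Psi_k(0,m)+2\Psi_k(i,m)\neq 3\qquad\text{for all }1\le i\le m .
\]
Reducing $K_t(x,m)$ modulo $3$ with $h-1=2\equiv -1$ and applying Vandermonde gives $K_t(x,m)\equiv(-1)^t\binom{m}{t}\pmod 3$, independently of $x$. Thus $\Psi_k(0,m)+2\Psi_k(i,m)\equiv 3\Psi_k(0,m)\equiv 0\pmod 3$. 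Congruences mod $3$ therefore cannot separate the two sides, and the argument has to be archimedean. Reducing mod $2$ or mod $4$ instead gives $K_t(x,m)\equiv(-1)^t\binom{x}{t}\pmod 2$. I would record this as a possible tie-breaker, but I do not expect it to settle every case.

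\textbf{Case $i=1$.} Item 1 of Lemma \ref{lem3} and the tightness remark give $\Psi_k(1,m)=K_k(0,m-1)=2^k\binom{m-1}{k}>1$. The right-hand side $-2(\Psi_k(1,m)-1)$ is then negative, while the left-hand side is positive, so the inequality is immediate.

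\textbf{Case $i\ge 2$.} Write $\Psi_k(i,m)=K_k(x,n)$ with $x=i-1\ge 1$ and $n=m-1$. The goal is a bound on $|K_k(x,n)|$ for $x\ge1$ that is strictly better than the extremal value $2^k\binom{n}{k}$ attained at $x=0$. It suffices to show
\[
2\,|K_k(x,n)|+3<\Psi_k(0,m)=\sum_{j=0}^k 2^j\binom{m}{j}.
\]
The first tool I would try is the shift identity
\[
K_k(x,n)=K_k(x-1,n-1)-K_{k-1}(x-1,n-1),
\]
obtained by splitting off one nonzero coordinate in the defining sum. Iterating it and bounding each term by item 3 of Lemma \ref{lem3} gives estimates of the form
\[
|K_k(x,n)|\le 2^k\binom{n-1}{k}+2^{k-1}\binom{n-1}{k-1}.
\]
Any such bound would then be compared with the expansion $\binom{m}{k}=\binom{m-2}{k}+2\binom{m-2}{k-1}+\binom{m-2}{k-2}$, together with the lower-order terms of $\Psi_k(0,m)$. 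Lemma \ref{lemma4} controls those lower-order terms, and the hypotheses $m\ge5$ and $k\le\lfloor\frac{m-1}{2}\rfloor$ control the binomial ratios.

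\textbf{Main obstacle.} This comparison is the step I expect to fail as stated. A quick check shows that for $k$ small relative to $m$, the crude bound $2\cdot 2^k\binom{m-2}{k}$ is about twice $2^k\binom{m}{k}$ and can exceed $\Psi_k(0,m)$. I would first try to sharpen the estimate for $x\ge1$, for example by using the orthogonality relation
\[
\sum_x\binom{n}{x}2^x K_k(x,n)^2=3^n2^k\binom{n}{k}
\]
for larger $x$, and the explicit alternating sum for small $x$, where the $j=1$ term partially cancels the $j=0$ term. If that is still insufficient, I would split according to the parity of $k$ and use the sign of the dominant term $2^k\binom{n-x}{k}$. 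When $2\Psi_k(i,m)$ is negative or small, the equality is impossible by size. In the remaining configurations I would fall back on the mod-$4$ congruence above.
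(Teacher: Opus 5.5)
Your reformulation $\Psi_k(0,m)+2\Psi_k(i,m)\neq 3$ is correct, and so is your treatment of $i=1$. However, the case $i\ge 2$, which is the substance of the lemma, is not proved, and the route you outline cannot prove it.

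The sufficient condition you aim for, $2|K_k(x,n)|+3<\Psi_k(0,m)$ for $x\ge 1$, is false. Take $m=11$, $k=2$, $i=2$. Then $\Psi_2(2,11)=K_2(1,10)=4\binom{9}{2}-2\cdot 9=126$, so $2\cdot 126+3=255>243=\Psi_2(0,11)$. Hence no sharpening of an absolute-value estimate, whether by orthogonality or otherwise, can close the argument.

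The statement is one-sided. Equality would force $\Psi_k(i,m)=1-\frac12\sum_{j=1}^k2^j\binom{m}{j}$, which is a large negative number. Large positive values such as $126$ are therefore harmless. The dangerous configurations are exactly those where $\Psi_k(i,m)$ is negative, so what must be shown is a lower bound on $\Psi_k(i,m)$.

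Your mod-$4$ fallback does not fill the hole either. Since $\Psi_k(0,m)\equiv 1+2m \pmod 4$ and $\Psi_k(i,m)\equiv\binom{i-1}{k}\pmod 2$, it excludes equality only when $m+\binom{i-1}{k}$ is even. It says nothing in the example above.

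The missing idea is to work degree by degree instead of isolating $K_k$. Since $K_0=1$,
\[
\sum_{j=1}^k 2^j\binom{m}{j}+2\bigl(\Psi_k(i,m)-1\bigr)=\sum_{t=1}^{k}\Bigl(2^t\binom{m}{t}+2K_t(i,m)\Bigr).
\]
Writing $\binom{m}{t}=\sum_{j}\binom{i}{j}\binom{m-i}{t-j}$ gives
\[
2^t\binom{m}{t}+2K_t(i,m)=\sum_{j=0}^{t}2^{t-j}\bigl(2^j+2(-1)^j\bigr)\binom{i}{j}\binom{m-i}{t-j}.
\]
All coefficients $2^j+2(-1)^j$ are nonnegative, and only $j=1$ gives $0$. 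So every summand is $\ge 0$.

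The $t=k$ summand is at least $3\cdot 2^k\binom{m-i}{k}+\bigl(2^k+2(-1)^k\bigr)\binom{i}{k}$. This is positive, because $i\le k-1$ and $m-i\le k-1$ together would give $m\le 2k-2$, contradicting $m\ge 2k+1$. Hence the left-hand side is strictly positive for every $1\le i\le m$, including $i=1$.

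The paper gives no proof of this lemma; it is quoted from Heng, Ding and Zhou. But this is exactly the positivity identity the paper itself uses for the analogous inequalities in Lemma \ref{lamma6}, Case 1(1), and Lemma \ref{lemma8}, Case 2. No bound on $|K_k|$ and no congruence is needed.
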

\noindent Now, we will prove the two conditions of Theorem \ref{thm2} hold for $\mathcal{C}_{f,g}$.
\begin{lemma}\label{lamma6}
Let $m, k_1,k_2$ be integers with $m \geq 9$ and $2 \leq k_1<k_1+1<k_2 \leq\left\lfloor\frac{m-1}{2}\right\rfloor$, denote $\mathcal{F}=\{f, g, f+g,f-g\}$, for any $F\in \mathcal{F}$, we have
\begin{align}
\operatorname{Re}\left(\widehat{F}\left(\mathbf{v}_1\right)\right)+ \operatorname{Re}\left(\widehat{F}\left(\mathbf{v}_2\right)\right)+\operatorname{Re}\left(\widehat{F}\left(\mathbf{v}_3\right)\right)\neq3^m,
\label{eq3}
\end{align}
where $\mathbf{v}_1, \mathbf{v}_2, \mathbf{v}_3\in \mathbb{F}_3^m$ are distinct vectors satisfying $\mathbf{v}_1+\mathbf{v}_2+\mathbf{v}_3=\mathbf{0}$.
\end{lemma}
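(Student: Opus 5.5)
The plan is to use the explicit real parts of the Walsh transforms computed in the proof of Theorem \ref{thm3}. For each $F\in\mathcal{F}$ they have the form
\[
\operatorname{Re}(\widehat{F}(\mathbf{0}))=3^m-\tfrac32 S_F,\qquad \operatorname{Re}(\widehat{F}(\mathbf{w}))=-\tfrac32 X_F(i)\quad(\operatorname{wt}(\mathbf{w})=i\ge 1).
\]
Here $S_F$ is the size of the support of $F$ on ${\mathbb{F}_3^m}^\ast$. For example, $S_{f+g}=\sum_{j=1}^{k_2-1}a_j$ and $X_{f+g}(i)=\Psi_{k_2-1}(i,m)-1$, and similarly for $f,g,f-g$, with $a_j=2^j\binom{m}{j}$ as in Lemma \ref{lemma4}. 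The first observation is that distinct $\mathbf{v}_1,\mathbf{v}_2,\mathbf{v}_3$ with $\mathbf{v}_1+\mathbf{v}_2+\mathbf{v}_3=\mathbf{0}$ contain at most one zero vector, since two zeros would force the third to be zero as well. So I split into two cases.

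\emph{Case (a): all $\mathbf{v}_j\neq\mathbf{0}$.} The left side of (\ref{eq3}) is then $-\frac32\bigl(X_F(i_1)+X_F(i_2)+X_F(i_3)\bigr)$. I will bound each $|X_F(i)|$ using Lemma \ref{lem3}(3), which gives $|\Psi_k(i,m)|\le 2^k\binom{m-1}{k}$. Since $X_F$ is a signed combination of at most three $\Psi$'s plus a constant, this yields $|X_F(i)|\le C\,2^{k_2}\binom{m-1}{k_2}$ for a small constant $C$. Because $k_2\le\lfloor\frac{m-1}{2}\rfloor$, the quantity $\frac92 C\,2^{k_2}\binom{m-1}{k_2}$ is far smaller than $3^m$ for $m\ge 9$; a crude estimate $2^{k}\binom{m-1}{k}\le 3^{m-1}$-type suffices after checking constants. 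Hence the sum cannot equal $3^m$.

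\emph{Case (b): exactly one $\mathbf{v}_j=\mathbf{0}$,} say $\mathbf{v}_3=\mathbf{0}$, so that $\mathbf{v}_2=-\mathbf{v}_1$ and both have weight $i\ge1$. Then (\ref{eq3}) fails exactly when $3^m-\frac32S_F-3X_F(i)=3^m$, i.e. $S_F=-2X_F(i)$. So I must show $S_F\neq -2X_F(i)$ for all $1\le i\le m$ and each $F\in\mathcal{F}$.
\begin{itemize}
\item For $F=f+g$ this is exactly Lemma \ref{lamma5} with $k=k_2-1$, which is allowed since $k_2-1\ge k_1+1\ge 3$.
\item For $F=g$ the condition reads $\sum_{j=k_1}^{k_2}a_j\neq -2(\Psi_{k_2}-\Psi_{k_1-1})$. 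I will rewrite this as $\sum_{j=1}^{k_2}a_j-\sum_{j=1}^{k_1-1}a_j\ne -2(\Psi_{k_2}-1)+2(\Psi_{k_1-1}-1)$ and try to reduce it to Lemma \ref{lamma5}-type statements.
\item The functions $f$ and $f-g$ are handled analogously.
\end{itemize}

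The main obstacle is Case (b) for $f$, $g$ and $f-g$, where Lemma \ref{lamma5} does not apply verbatim. My first attempt is to redo the argument behind Lemma \ref{lamma5} for these mixed combinations, splitting by $i$.
\begin{itemize}
\item For $i=1$: $\Psi_k(1,m)=2^k\binom{m-1}{k}>0$ is explicit, and the identity can be checked directly, for instance by a sign or size comparison.
\item For $i\ge 2$: use $\Psi_k(i,m)=K_k(i-1,m-1)$ and the bound of Lemma \ref{lem3}(3) to show $2|X_F(i)|$ is strictly less than $S_F$. Here the dominance $a_{k}>\sum_{j<k}a_j$ from Lemma \ref{lemma4} gives $S_F\ge a_{k_2}$ or $S_F\ge a_{k_2-1}$. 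Comparing $a_{k}=2^k\binom mk$ with $2^k\binom{m-1}{k}=a_k\frac{m-k}{m}$ leaves room only when the $\Psi$-combination has effectively one dominant term.
\end{itemize}
If the pure size bound is too weak for some small $i$, I would fall back on the explicit Krawtchouk expression for small $i$, and on the fact that for $i$ near $m/2$ the values $|K_k(i-1,m-1)|$ are much smaller than the extremal bound.
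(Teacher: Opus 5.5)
\textbf{Verdict: there is a genuine gap, in your Case (b) for $F\in\{f,g,f-g\}$.}

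\textbf{What is fine.} Your Case (a), with all three vectors nonzero, works and matches the paper: it is a size bound from Lemma \ref{lem3}(3). One caveat: with $C=3$, the estimate $2^{k}\binom{m-1}{k}\le 3^{m-1}$ is too crude. Use instead that the few terms $2^t\binom{m-1}{t}$ with $t\le\frac{m-1}{2}$ are a small part of $\sum_t 2^t\binom{m-1}{t}=3^{m-1}$. Your handling of $F=f+g$ in Case (b) via Lemma \ref{lamma5} is also fine.

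\textbf{Why the size bound fails in Case (b).} Bounding $2|X_F(i)|$ by Lemma \ref{lem3}(3) cannot work in general, for two reasons. First, that bound does not depend on $i$, so splitting into $i=1$ and $i\ge 2$ gains nothing. Second, $2\cdot 2^{k}\binom{m-1}{k}=a_k\cdot\frac{2(m-k)}{m}>a_k$ for $k<m/2$, while $a_{k_2}$ dominates the lower $a_j$ by Lemma \ref{lemma4}. So when $k_2$ is small compared with $m$, the inequality $2|X_F(i)|<S_F$ is out of reach. For example, take $m=20$, $k_1=2$, $k_2=4$, $F=g$. Then $S_g=a_2+a_3+a_4=87400$, but the bound only gives $2|X_g(i)|\le 2\bigl(2^4\binom{19}{4}+2\binom{19}{1}\bigr)=124108$. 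Lemma \ref{lamma5} itself cannot be proved this way either.

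\textbf{Why the other routes fail.} Your reduction for $g$, subtracting two instances of Lemma \ref{lamma5}, is not valid: $A\neq B$ and $C\neq D$ do not imply $A-C\neq B-D$. Also, $k_1-1$ may equal $1$, which is outside the range of Lemma \ref{lamma5}. The fallbacks about ``small $i$'' and ``$i$ near $m/2$'' do not amount to an argument.

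\textbf{The missing idea: a term-by-term sign argument.} Let $T_F$ be the set of Hamming weights on which $F$ is nonzero; for example $T_g=\{k_1,\dots,k_2\}$ and $T_f=\{1,\dots,k_1-1\}\cup\{k_1+1,\dots,k_2\}$. Then $S_F=\sum_{t\in T_F}a_t$ and $X_F(i)=\sum_{t\in T_F}K_t(i,m)$, so
\[
S_F+2X_F(i)=\sum_{t\in T_F}\bigl(a_t+2K_t(i,m)\bigr).
\]
Using Vandermonde's identity $\binom mt=\sum_j\binom ij\binom{m-i}{t-j}$,
\[
2^t\binom mt+2K_t(i,m)=\sum_{j=0}^{t}2^{t-j}\bigl(2^j+2(-1)^j\bigr)\binom ij\binom{m-i}{t-j}\ \ge\ 0,
\]
because $2^j+2(-1)^j\ge 0$, with equality only at $j=1$.

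For $2\le t\le\frac{m-1}{2}$ this quantity is strictly positive:
\begin{itemize}
\item if $i\le m-t$, the $j=0$ term is positive;
\item otherwise $i>m-t>t$, and the $j=t$ term is positive.
\end{itemize}
Every $T_F$ contains such a $t$. Hence $S_F+2X_F(i)>0$ for all $1\le i\le m$ and all four $F$. This is exactly how the paper settles Case 1 for $f$ and $g$, and the same computation also yields Lemma \ref{lamma5}.
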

\textbf{Proof}. We divide into the following two cases to show that (\ref{eq3}) holds for the claimed vectors.

{\bf Case 1}. One of $\mathbf{v}_1, \mathbf{v}_2, \mathbf{v}_3 $ is  $\mathbf{0}$. Without loss of generality, we assume $\mathbf{v}_1 = \mathbf{0}$, then $\mathbf{v}_2=-\mathbf{v}_3\neq\mathbf{0}$, and  $\mathrm{wt}\left(\mathbf{v}_2\right)=\mathrm{wt}\left(\mathbf{v}_3\right)=i ( 1 \leq i \leq m) $. Depending on the choices of $F$, we will divide this case into four subcases.

(1). If $F=f$. By Theorem \ref{thm3}, the above inequality (\ref{eq3}) is equivalent to
\begin{eqnarray*}
\operatorname{Re}\left(\widehat{f}(\mathbf{0})\right)+\operatorname{Re}\left(\widehat{f}(\mathbf{v}_2)\right)+\operatorname{Re}\left(\widehat{f}(\mathbf{v}_3)\right)
&=&3^m-\frac{3}2{}(\sum_{j=1}^{k_1-1} 2^j\binom{m}{j}+\sum_{j=k_1+1}^{k_2} 2^j\binom{m}{j})\\
&&-3(\Psi_{k_2} (i, m) +\Psi_{k_1-1} (i, m)- \Psi_{k_1} (i, m) -1)\neq 3^m,
\end{eqnarray*}
namely
$$
\sum_{j=1}^{k_2} 2^j\binom{m}{j}-2^{k_1}\binom{m}{k_1} \neq-2\left(\Psi_{k_2} (i, m) +\Psi_{k_1-1} (i, m)- \Psi_{k_1} (i, m) -1\right),
$$
since $m \geq 9$ and $k_1< k_2\leq\left\lfloor\frac{m-1}{2}\right\rfloor$, note that $\Psi_k(i, m)=\sum_{t=0}^k K_t(i, m)$, we have
\[\sum_{j=1}^{k_2} 2^j\binom{m}{j}-2^{k_1}\binom{m}{k_1}+2 \sum_{t=1}^{k_2} K_t(i,m)+2 \sum_{t=1}^{k_1-1} K_t (i,m)-2 \sum_{t=1}^{k_1} K_t (i,m)\neq0 .\]
i.e.,
\[\sum_{j=1}^{k_1-1} 2^j\binom{m}{j}+\sum_{j=k_1+1}^{k_2} 2^j\binom{m}{j}+2 \sum_{t=1}^{k_1-1} K_t (i,m)+2 \sum_{t=k_1+1}^{k_2} K_t (i,m)\neq0 .\]
Since $\binom{m}{t}=\sum_{j=0}^t\binom{i}{j}\binom{m-i}{t-j}$, then
\[2^t\binom{m}{t}+2 K_t(i, m)=\sum_{j=0}^t 2^{t-j}\left(2^j+2(-1)^j\right)\binom{i}{j}\binom{m-i}{t-j}>0,
\]
thus the inequality (\ref{eq3}) holds.

(2). If $F=g$. By Theorem \ref{thm3}, the inequality (\ref{eq3}) is equivalent to
\begin{eqnarray*}
\operatorname{Re}\left(\widehat{g}(\mathbf{0})\right)
+\operatorname{Re}\left(\widehat{g}(\mathbf{v}_2)\right)+\operatorname{Re}\left(\widehat{g}(\mathbf{v}_3)\right)
=3^m-\frac{3}{2}(\sum_{j=k_1}^{k_2} 2^j\binom{m}{j})
-3(\Psi_{k_2} (i, m) -\Psi_{k_1-1} (i, m))\neq 3^m,
\end{eqnarray*}
i.e.,
\[\sum_{t=k_1}^{k_2} 2^t\binom{m}{t}+2 \sum_{t=k_1}^{k_2} K_t (i,m)=\sum_{t=k_1}^{k_2}\left(2^t\binom{m}{t}+2 K_t (i,m)\right)\neq0 .\]
Thus the inequality (\ref{eq3}) holds.

(3). If $F=f+g$. By Theorem \ref{thm3}, the inequality (\ref{eq3}) is equivalent to
\begin{eqnarray*}
\operatorname{Re}\left(\widehat{f+g}(\mathbf{0})\right)+\operatorname{Re}\left(\widehat{f+g}(\mathbf{v}_2)\right)+\operatorname{Re}\left(\widehat{f+g}(\mathbf{v}_3)\right)
=3^m-\frac{3}{2}(\sum_{j=1}^{k_2-1} 2^j\binom{m}{j})
-3(\Psi_{k_2-1} (i, m)  -1)\neq 3^m,
\end{eqnarray*}
namely,
\[\sum_{j=1}^{k_2-1} 2^j\binom{m}{j} \neq-2\left(\Psi_{k_2-1} (i, m) -1\right),\]
by Lemma 5, the inequality (\ref{eq3}) holds.

(4). If $F=f-g$. The proof of this case is similar to that of (1), so we omit it.

{\bf Case 2}. Assuming that $\mathbf{v}_1, \mathbf{v}_2$ and $\mathbf{v}_3$ are all non-zero vectors. Depending on the choices of $F$, we divide this case into four subcases.

(1). If $F=f$. Due to Theorem \ref{thm3} and Lemma \ref{lem3}, we derive that
\[\left|\operatorname{Re}\left(\widehat{f}(\mathbf{v}_1)\right)+\operatorname{Re}\left(\widehat{f}(\mathbf{v}_2)\right)+\operatorname{Re}\left(\widehat{f}(\mathbf{v}_3)\right)\right| <6\left| \Psi_{k_2} (i, m)+\Psi_{k_1-1} (i, m)-\Psi_{k_1} (i, m)-1\right|.\]
Since $ 2 \leq k_1<k_2 \leq\left\lfloor\frac{m-1}{2}\right\rfloor$, we have
\begin{eqnarray*}
2^{k_2+1}\binom{m-1}{k_2}+2^{k_1}\binom{m-1}{k_1-1}+2^{k_1+1}\binom{m-1}{k_1}+2 <2^{k_2+1}\binom{m-1}{k_2+1}+2^{k_1}\binom{m-1}{k_1}+2^{k_1+1}\binom{m-1}{k_1+1}+2
 <3^{m-1} .
\end{eqnarray*}
Thus, the inequality (\ref{eq3}) holds.

(2). If $F=g$. Similarly, we obtain
\[\left|\operatorname{Re}\left(\widehat{g}\left(\mathbf{v}_1\right)\right)+\operatorname{Re}\left(\widehat{g}\left(\mathbf{v}_2\right)\right)+
\operatorname{Re}\left(\widehat{g}\left(\mathbf{v}_3\right)\right)\right| \leq \frac{9}{2} \times\left| \Psi_{k_2} (i, m)+\Psi_{k_1-1} (i, m)\right|<6\left| \Psi_{k_2} (i, m)+\Psi_{k_1-1} (i, m)\right|.\]
Since $ 2 \leq k_1<k_2 \leq\left\lfloor\frac{m-1}{2}\right\rfloor$, we have
\begin{eqnarray*}
2^{k_2+1}\binom{m-1}{k_2}+2^{k_1}\binom{m-1}{k_1-1} <2^{k_2+1}\binom{m-1}{k_2+1}+2^{k_1}\binom{m-1}{k_1}
 <3^{m-1} .
\end{eqnarray*}
Thus, the inequality (\ref{eq3}) holds.

(3). If $F=f+g$. By Theorem 3 and Lemma \ref{lem3}, we have
\begin{eqnarray*}
\left|\operatorname{Re}\left(\widehat{f+g}(\mathbf{v}_1)\right)+\operatorname{Re}\left(\widehat{f+g}(\mathbf{v}_2)\right)+
\operatorname{Re}\left(\widehat{f+g}(\mathbf{v}_3)\right)\right|
\leq\frac{9}{2} \times 2^{k_2-1}\binom{m-1}{k_2-1}+\frac{9}{2} <3\times2^{k_2}\binom{m-1}{k_2}<3^m.
\end{eqnarray*}
Thus the inequality (\ref{eq3}) holds.

(4). If $F=f-g$. The proof of this case is similar to that of (1), so we omit it.

This completes the proof of Lemma \ref{lamma6}.

\begin{lemma}\label{lemma7}
Let $m, k$ be integers with $m \geq 9$ and $2 \leq k_1<k_1+1<k_2 \leq\left\lfloor\frac{m-1}{2}\right\rfloor$, denote $\mathcal{F}=\{f, g, f+g,f-g\}$, for any $F\in \mathcal{F}$, we have
\begin{eqnarray}\label{eq4}
\operatorname{Re}\left(\widehat{F}\left(\mathbf{v}_1\right)\right)+ \operatorname{Re}\left(\widehat{F}\left(\mathbf{v}_2\right)\right)-2\operatorname{Re}\left(\widehat{F}\left(\mathbf{v}_3\right)\right)\neq3^m,
\end{eqnarray}
where $\mathbf{v}_1, \mathbf{v}_2, \mathbf{v}_3\in \mathbb{F}_3^m$ are distinct vectors satisfying $\mathbf{v}_1+\mathbf{v}_2+\mathbf{v}_3=\mathbf{0}$.
\end{lemma}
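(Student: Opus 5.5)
I will follow the structure of the proof of Lemma \ref{lamma6}. The case split is according to whether one of $\mathbf{v}_1,\mathbf{v}_2,\mathbf{v}_3$ is zero, and inside each case by $F\in\mathcal{F}$. The one new feature is that the expression in (\ref{eq4}) is not symmetric in the three vectors. So in Case 1 I have to distinguish whether $\mathbf{0}$ occupies a position of coefficient $+1$ or of coefficient $-2$.

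\textbf{Case 1a: $\mathbf{v}_3=\mathbf{0}$.} Then $\mathbf{v}_1=-\mathbf{v}_2\neq\mathbf{0}$ and both have weight $i\ge 1$. Write $\operatorname{Re}(\widehat{F}(\mathbf{0}))=3^m-\frac32 N_F$, where $N_F$ is the sum of those $2^j\binom{m}{j}$ over the weight layers on which $F\neq 0$. Write also $\operatorname{Re}(\widehat{F}(\mathbf{w}))=-\frac32 P_F(i)$ for $\mathrm{wt}(\mathbf{w})=i$, using the formulas in the proof of Theorem \ref{thm3}. The left side of (\ref{eq4}) then becomes $-2\cdot 3^m+3N_F-3P_F(i)$. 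Equality with $3^m$ would force $N_F-P_F(i)=3^m$. By Lemma \ref{lem3}(3), $|P_F(i)|$ is at most a few terms of the form $2^{k}\binom{m-1}{k}$. Since $k_2\le\lfloor\frac{m-1}{2}\rfloor$, we also have $N_F\le\sum_{j\le k_2}2^j\binom{m}{j}$, which is far below $3^m/2$. A crude bound $N_F+|P_F(i)|<3^m$ therefore finishes this case, uniformly in $F$.

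\textbf{Case 1b: $\mathbf{v}_1=\mathbf{0}$ (the case $\mathbf{v}_2=\mathbf{0}$ is symmetric).} Then $\mathbf{v}_3=-\mathbf{v}_2$ and both have weight $i$. The Walsh values at $\mathbf{v}_2$ and $\mathbf{v}_3$ depend only on $i$, so the $+1$ and $-2$ coefficients combine to a total coefficient $-1$. The left side becomes $3^m-\frac32N_F+\frac32P_F(i)$. Thus (\ref{eq4}) is equivalent to $N_F\neq P_F(i)$.

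For $F=g$ I would write $N_g-P_g(i)=\sum_{t=k_1}^{k_2}\bigl(2^t\binom{m}{t}-K_t(i,m)\bigr)$. Every summand is positive, because $|K_t(i,m)|\le\sum_j2^{t-j}\binom{i}{j}\binom{m-i}{t-j}$, which is strictly less than $2^t\binom{m}{t}$ when $t\ge 1$. For $f$, $f+g$ and $f-g$, I express $P_F(i)$ through the $K_t$: it is $\sum_t\epsilon_tK_t(i,m)$ with $\epsilon_t\in\{0,1\}$, together with an additive constant coming from the $-1$ corrections in Theorem \ref{thm3}. The same termwise positivity $2^t\binom{m}{t}>|K_t(i,m)|$ then gives $N_F-P_F(i)>0$, after checking the signs of the constants.

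This sign-tracking is the step I expect to be the main obstacle. For $f+g$ and $f-g$ the constant $+\frac32$ appears, and the indicator set is irregular: for $f$ it is $\{1,\dots,k_1-1\}\cup\{k_1+1,\dots,k_2\}$. I would handle this by showing the strict inequality termwise and absorbing the constant using the slack at $t=1$. There $2\binom{m}{1}-|K_1(i,m)|\ge 2m-2m=0$ is not enough by itself, so I would take the slack from $t=k_2$ instead: $2^{k_2}\binom{m}{k_2}-|K_{k_2}|\ge 2$. Should that fail, I would fall back on Lemma \ref{lemma4} to dominate the lower terms.

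\textbf{Case 2: all $\mathbf{v}_j$ nonzero.} As in Case 2 of Lemma \ref{lamma6}, I use $|\operatorname{Re}\widehat{F}(\mathbf{w})|\le\frac32|P_F(\mathrm{wt}(\mathbf{w}))|$ together with Lemma \ref{lem3}(3). This bounds the left side of (\ref{eq4}) in absolute value by $(1+1+2)\cdot\frac32\max|P_F|$. That quantity is at most $6$ times a sum of at most three terms $2^{k}\binom{m-1}{k}$ with $k\le k_2$, plus a constant. For $m\ge 9$ this is less than $3^m$, by the same estimate against $3^{m-1}$ already used in the proof of Lemma \ref{lamma6}.
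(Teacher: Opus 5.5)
Your proposal is correct and follows essentially the paper's proof. It uses the same split according to whether $\mathbf{0}$ sits at the coefficient $-2$ (crude Lemma~\ref{lem3} bounds, as also in Case~2) or at a coefficient $+1$ (positivity of $2^t\binom{m}{t}-K_t(i,m)$, which you apply to all four $F$ where the paper uses it for $f$ and Lemma~\ref{lem3} bounds for $g$ and $f+g$); moreover, the sign-tracking obstacle you anticipate does not arise, because $K_0(i,m)=1$ cancels the $-1$ corrections exactly, so $P_F(i)=\sum_t\epsilon_tK_t(i,m)$ has no additive constant and $N_F-P_F(i)=\sum_t\epsilon_t\bigl(2^t\binom{m}{t}-K_t(i,m)\bigr)>0$ directly.
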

\textbf{Proof}. We divide into the following two cases to show that (\ref{eq4}) holds for the claimed vectors.

{\bf Case 1}. One of $\mathbf{v}_1, \mathbf{v}_2, \mathbf{v}_3 $ is  $\mathbf{0}$. We consider the following subcases.

(1). If $F=f$ and $\mathbf{v}_3=\mathbf{0}$. Then $\mathbf{v}_1=-\mathbf{v}_2\neq\mathbf{0}$, and $\mathrm{wt}\left(\mathbf{v}_1\right)=\mathrm{wt}\left(\mathbf{v}_2\right)=i(1 \leq i \leq m) $.
Inequality (\ref{eq4}) is equivalent to
\[\left(\sum_{j=0}^{k_2} 2^j\binom{m}{j}-2^{k_1}\binom{m}{k_1}\right)-\left(\Psi_{k_2} (i, m)+\Psi_{k_1-1} (i, m)-\Psi_{k_1} (i, m)-1\right) \neq 3^m \text { for all } 1 \leq i \leq m .\]
Since $m \geq 9$ and $k_1< k_2\leq\left\lfloor\frac{m-1}{2}\right\rfloor$, by Lemma \ref{lem3},
\begin{eqnarray*}
\left(\sum_{j=1}^{k_2} 2^j\binom{m}{j}-2^{k_1}\binom{m}{k_1}\right)-\left(\Psi_{k_2} (i, m)+\Psi_{k_1-1} (i, m)-\Psi_{k_1} (i, m)-1\right)& \leq &\sum_{j=1}^{k_2} 2^j\binom{m}{j}-2^{k_1}\binom{m}{k_1}+2^{k_2}\binom{m-1}{k_2}\\
& + & 2^{k_1}\binom{m-1}{k_1}+2^{k_1-1}\binom{m-1}{k_1-1}\\
& <&\sum_{j=1}^{k_2} 2^j\binom{m}{j}-2^{k_1}\binom{m}{k_1}+2^{k_2}\binom{m}{k_2}\\
& + & 2^{k_1}\binom{m}{k_1}+2^{k_1-1}\binom{m}{k_1-1}\\
& <&\sum_{j=0}^m 2^j\binom{m}{j}=3^m.
\end{eqnarray*}
Thus the inequality (\ref{eq4}) holds.

(2). If $F=f$ and $\mathbf{v}_3\neq\mathbf{0}$. Without loss of generality, we assume that $\mathbf{v}_1 = \mathbf{0}$, then $\mathrm{wt}\left(\mathbf{v}_2\right)=\mathrm{wt}\left(\mathbf{v}_3\right)=i(1 \leq i \leq m)$.
Inequality (\ref{eq4}) is equivalent to
\[\left(\sum_{j=1}^{k_2} 2^j\binom{m}{j}-2^{k_1}\binom{m}{k_1}\right)-\left(\Psi_{k_2} (i, m)+\Psi_{k_1-1} (i, m)-\Psi_{k_1} (i, m)-1\right) \neq 0 \text { for all } 1 \leq i \leq m .\]
Note that $\Psi_k(i, m)=\sum_{t=0}^k K_t(i, m)$, we have
\[\sum_{j=1}^{k_2} 2^j\binom{m}{j}-2^{k_1}\binom{m}{k_1}-\sum_{t=1}^{k_2} K_t(i, m)-\sum_{t=1}^{k_1-1} K_t(i, m)+\sum_{t=1}^{k_1} K_t(i, m)\neq0 .\]
i.e.,
\[\sum_{t=1}^{k_1-1}( 2^t\binom{m}{t}- K_t(i, m))+\sum_{t=k_1+1}^{k_2} (2^t\binom{m}{t}-K_t(i, m))\neq0 .\]
Since
\[2^t\binom{m}{t}-K_t(i, m)=\sum_{j=0}^t 2^{t-j}\left(2^j-(-1)^j\right)\binom{i}{j}\binom{m-i}{t-j}>0,
\]
thus the inequality (\ref{eq4}) holds.

(3). If $F=g$ and $\mathbf{v}_3=\mathbf{0}$. Then $\mathbf{v}_1=-\mathbf{v}_2\neq\mathbf{0}$, and $\mathrm{wt}\left(\mathbf{v}_1\right)=\mathrm{wt}\left(\mathbf{v}_2\right)=i(1 \leq i \leq m )$.
Inequality (\ref{eq4}) is equivalent to
 \[\left(\sum_{j=k_1}^{k_2}2^j\binom{m}{j}-\Psi_{k_2} (i, m)+\Psi_{k_1-1} (i, m)\right) \neq 3^m ,\]
by Lemma 3, we have
\begin{eqnarray*}
\left(\sum_{j=k_1}^{k_2}2^j\binom{m}{j}-\Psi_{k_2} (i, m)+\Psi_{k_1-1} (i, m)\right)
&\leq &\sum_{j=k_1}^{k_2}2^j\binom{m}{j}+ 2^{k_2}\binom{m-1}{k_2}+2^{k_1-1}\binom{m-1}{k_1-1}\\
&< &\sum_{j=k_1}^{k_2}2^j\binom{m}{j}+ 2^{k_2+1}\binom{m}{k_2+1}+2^{k_1-1}\binom{m}{k_1-1}< 3^m.
\end{eqnarray*}
Thus the inequality (\ref{eq4}) holds.

(4). If $F=g$ and $\mathbf{v}_3\neq\mathbf{0}$. Without loss of generality, we assume that $\mathbf{v}_1 = \mathbf{0}$, then $\mathrm{wt}\left(\mathbf{v}_2\right)=\mathrm{wt}\left(\mathbf{v}_3\right)=i (1 \leq i\leq m )$.
Inequality (\ref{eq4}) is equivalent to
\[\sum_{j=k_1}^{k_2} 2^j\binom{m}{j} \neq \Psi_{k_2} (i, m)-\Psi_{k_1-1} (i, m) \text { for all } 1 \leq i \leq m .\]
By Lemma \ref{lem3}, we have
\[\left|\Psi_{k_2}(i, m)-\Psi_{k_1-1}(i, m)\right| \leq 2^{k_2}\binom{m-1}{k_2}+2^{k_1-1}\binom{m-1}{k_1-1},\]
then
\[\sum_{j=k_1}^{k_2} 2^j\binom{m}{j} > 2^{k_2}\binom{m-1}{k_2}+2^{k_1-1}\binom{m-1}{k_1-1}\geq\left|\Psi_{k_2}(i, m)-\Psi_{k_1-1}(i, m)\right| .\]
Thus the inequality (\ref{eq4}) holds.

(5). If $F=f+g$ and $\mathbf{v}_3=\mathbf{0}$. Then $\mathbf{v}_1=-\mathbf{v}_2\neq\mathbf{0}$, and $\mathrm{wt}\left(\mathbf{v}_1\right)=\mathrm{wt}\left(\mathbf{v}_2\right)=i (1 \leq i \leq m )$.
Inequality (\ref{eq4}) is equivalent to
\[\sum_{j=0}^{k_2-1}2^j\binom{m}{j}-\Psi_{k_2-1} (i, m)\neq 3^m \text { for all } 1 \leq i \leq m .\]
The above inequality obviously holds, thus, the inequality (\ref{eq4}) holds.

(6). If $F=f+g$ and $\mathbf{v}_3\neq\mathbf{0}$. Without loss of generality, we assume that $\mathbf{v}_1 = \mathbf{0}$,
then $\mathrm{wt}\left(\mathbf{v}_2\right)=\mathrm{wt}\left(\mathbf{v}_3\right)=i ( 1 \leq i \leq m)$. Inequality (\ref{eq4}) is equivalent to
\[\sum_{j=0}^{k_2-1} 2^j\binom{m}{j}\neq \Psi_{k_2-1} (i, m)\text { for all } 1 \leq i \leq m.\]
Due to Lemma \ref{lem3}, we have
\[\Psi_{k_2-1} (i, m) \leq 2^{k_2-1}\binom{m-1}{k_2-1}<2^{k_2-1}\binom{m}{k_2-1}<\sum_{j=0}^{k_2-1} 2^j\binom{m}{j}.\]
Thus the inequality (\ref{eq4}) holds.

(7). If $F=f-g$ and $\mathbf{v}_3=\mathbf{0}$. Since the proof of this case is similar to that of (1), we omit it here.

(8). If $F=f-g$ and $\mathbf{v}_3\neq\mathbf{0}$. Since the proof of this case is similar to that of (2), we omit it here.

{\bf Case 2}. Assuming that $\mathbf{v}_1, \mathbf{v}_2$ and $\mathbf{v}_3$ are all non-zero vectors. The proof of this case is similar to that of Lemma \ref{lamma6} Case 2, so we omit it here.

Summarizing the discussions above, this completes the proof of Lemma \ref{lemma7}.

\begin{lemma}\label{lemma8}
Let $m, k_1,k_2$ be integers with $m \geq 9$ and $2 \leq k_1<k_1+1<k_2 \leq\left\lfloor\frac{m-1}{2}\right\rfloor$, denote $\mathcal{F}=\{f, g, f+g,f-g\}$, for any $F_1,F_2\in \mathcal{F}$ with $F_1\neq F_2$, we have
\begin{eqnarray}\label{eq5}
\operatorname{Re}\left(\widehat{F_1+F_2}\left(\mathbf{v}_1+\mathbf{v}_2\right)\right)+\operatorname{Re}\left(\widehat{F_2-F_1}\left(\mathbf{v}_1
-\mathbf{v}_2\right)\right)-2 \operatorname{Re}\left(\widehat{F_1}\left(\mathbf{v}_1\right)\right)+\operatorname{Re}\left(\widehat{ F_2}\left(\mathbf{v}_2\right)\right)\neq3^m,
\end{eqnarray}
where $\mathbf{v}_1, \mathbf{v}_2\in \mathbb{F}_3^m$ .
\end{lemma}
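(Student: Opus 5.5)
The plan is to split into cases according to which of the four arguments $\mathbf{v}_1+\mathbf{v}_2$, $\mathbf{v}_1-\mathbf{v}_2$, $\mathbf{v}_1$, $\mathbf{v}_2$ equal $\mathbf{0}$. Before that I would record two facts.

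First, modulo $3$ the functions $F_1+F_2$ and $F_2-F_1$ again lie in $\pm\mathcal{F}$ (for instance $(f+g)+(f-g)=-f$). Combined with $\operatorname{Re}(\widehat{-F}(\mathbf{v}))=\operatorname{Re}(\widehat{F}(-\mathbf{v}))$, this means every term is one of the four explicit real parts computed in the proof of Theorem \ref{thm3}.

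Second, that proof gives each real part in a uniform shape. At $\mathbf{0}$ one has $\operatorname{Re}(\widehat{G}(\mathbf{0}))=3^m-\frac32 N_G$, where $N_G=|\mathrm{Supp}(G)|$ is a sum of some of $a,b,c,d$. The supports are $A\cup C\cup D$ for $f$, $B\cup C\cup D$ for $g$, $A\cup B\cup C$ for $f+g$, and $A\cup B\cup D$ for $f-g$. At a nonzero $\mathbf{w}$ of weight $i$, $\operatorname{Re}(\widehat{G}(\mathbf{w}))$ is $-\frac32$ times a short signed combination of $\Psi_{k}(i,m)$, plus possibly a constant. By Lemma \ref{lem3}(3) this is at most about $\frac32\bigl(2^{k_2}\binom{m-1}{k_2}+2^{k_1}\binom{m-1}{k_1}+2^{k_1-1}\binom{m-1}{k_1-1}+1\bigr)$ in absolute value.

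\emph{Case (a): $\mathbf{v}_1=\mathbf{v}_2=\mathbf{0}$.} All four arguments vanish. Rather than computing, I would read the equality in (\ref{eq5}) backwards through Lemma \ref{lem2}: it says exactly that the codeword $\mathbf{F}_2$ is covered by $\mathbf{F}_1$ (this is how the expression arose in Theorem \ref{thm2}, Case 1(2)). Since $A,B,C,D$ are all nonempty, none of the four supports listed above contains another. Hence no covering occurs and (\ref{eq5}) holds.

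\emph{Case (b): exactly one argument is $\mathbf{0}$.} This happens when $\mathbf{v}_1=\mathbf{0}$, or $\mathbf{v}_2=\mathbf{0}$, or $\mathbf{v}_1=\pm\mathbf{v}_2\neq\mathbf{0}$. The three remaining arguments are then $\pm$ a single nonzero vector, so they share a common weight $i$, and everything is expressed through $\Psi_k(i,m)$ for one value of $i$.

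\begin{itemize}
\item If the zero argument carries coefficient $-2$ (the $\widehat{F_1}(\mathbf{v}_1)$ term with $\mathbf{v}_1=\mathbf{0}$), the left side is $-2\cdot 3^m+O\bigl(2^{k_2}\binom{m}{k_2}\bigr)$. This cannot equal $3^m$, by the same crude bounds as in Lemma \ref{lamma6}, Case 2.
\item If the coefficient is $+1$, then (\ref{eq5}) reduces to $N_G\neq-\frac23R$, where $R$ is the sum of the three nonzero-argument terms. Writing $\Psi_k=\sum_{t\le k}K_t$ and $N_G=\sum_t 2^t\binom{m}{t}$ over the relevant layers, I would regroup term by term into expressions $2^t\binom{m}{t}+\epsilon K_t(i,m)$ with $|\epsilon|\le 2$. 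These are positive by the identity used in Lemmas \ref{lamma6} and \ref{lemma7}, namely $2^t\binom{m}{t}+\epsilon K_t=\sum_j 2^{t-j}(2^j+\epsilon(-1)^j)\binom{i}{j}\binom{m-i}{t-j}$.
\item Where the regrouping leaves leftover layers with larger multiplicity, I would fall back on Lemma \ref{lemma4}: $a_{k}>\sum_{j<k}a_j$. Together with $|\Psi_k(i,m)|\le 2^k\binom{m-1}{k}<a_k$, this lets the top layer of $N_G$ dominate.
\end{itemize}

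\emph{Case (c): all four arguments nonzero.} The left side of (\ref{eq5}) is at most $5\cdot\frac32$ times the bound above, roughly $8\cdot 2^{k_2}\binom{m-1}{k_2}$. Since $k_2\le\lfloor\frac{m-1}2\rfloor$ and $m\ge 9$, this is below $3^m$, as in Lemma \ref{lamma6}, Case 2.

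The main obstacle is the coefficient-$+1$ subcase of Case (b). There are twelve ordered pairs $(F_1,F_2)$, each with several positions for the zero argument, and for some of them the coefficients of individual $K_t$ may reach $3$ or $4$ in absolute value. In that situation the clean positivity identity fails, and one must combine Lemma \ref{lemma4} with Lemma \ref{lem3}(3), possibly using the gap $k_1+1<k_2$, to keep a strict inequality. I would handle one representative pair, say $F_1=f$, $F_2=g$, in full, and check the others by the same regrouping.
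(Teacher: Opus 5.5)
Your argument is correct in outline. Apart from case (a), it is the paper's proof: (a) is the paper's Case 5, the coefficient-$(-2)$ part of (b) is Case 1, the coefficient-$(+1)$ part is Cases 2 and 4, and (c) is Case 3. In (b) and (c) you use the same two tools as the paper, namely the crude bounds from Lemma~\ref{lem3}(3) and the Krawtchouk positivity identity.

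The genuinely different step is (a). The paper evaluates the left side for one pair, obtaining $3^m-\frac92\,2^{k_1}\binom{m}{k_1}$ for $F_1=f$, $F_2=g$, and declares the other pairs similar. You instead use $\operatorname{Re}\widehat{G}(\mathbf 0)=3^m-\frac32\,\mathrm{wt}(\mathbf G)$ and Lemma~\ref{lem2} to read equality as $\mathbf F_2\preceq\mathbf F_1$. You then settle all twelve ordered pairs at once, because the supports $A\cup C\cup D$, $B\cup C\cup D$, $A\cup B\cup C$, $A\cup B\cup D$ are pairwise non-nested. That is valid and more conceptual. The paper's route, carried out in general, buys the explicit value $3^m-\frac92|S|$, where $S\in\{A,B,C,D\}$ is the set missing from $\mathrm{Supp}(F_1)$.

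Three corrections are needed, two in (b) and one in (c).

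First, the coefficient-$(+1)$ reduction is $N_G\neq\frac23R$, not $N_G\neq-\frac23R$. The sign matters. In $2^t\binom{m}{t}+\epsilon K_t(i,m)=\sum_j2^{t-j}(2^j+\epsilon(-1)^j)\binom{i}{j}\binom{m-i}{t-j}$ the coefficients are nonnegative only for $-1\le\epsilon\le2$, not for all $|\epsilon|\le 2$. At $\epsilon=-2$ positivity really fails, e.g.\ $2m-2K_1(1,m)=6-2m$.

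Second, the obstacle you anticipate never occurs.
Put $\Sigma_S(i)=\sum_{t\in S}K_t(i,m)$ for $S\in\{A,B,C,D\}$. Then $\Sigma_S(0)=|S|$, and $\operatorname{Re}\widehat G(\mathbf w)=-\frac32\sum_{S\subseteq\mathrm{Supp}(G)}\Sigma_S(i)$ whenever $\mathrm{wt}(\mathbf w)=i\ge1$. Note that $f,g,f+g,f-g$ omit $B,A,D,C$ respectively.
Suppose the zero argument belongs to $G_0$ with coefficient $+1$. The other three coefficients are $-2,1,1$, which sum to $0$. The corresponding functions omit exactly the three sets composing $\mathrm{Supp}(G_0)$. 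Hence the $\Sigma_A+\Sigma_B+\Sigma_C+\Sigma_D$ contributions cancel, and the left side of (\ref{eq5}) equals $3^m$ if and only if
\[\sum_{S\subseteq\mathrm{Supp}(G_0)}\ \sum_{t\in S}\Bigl(2^t\binom{m}{t}-c_S\,K_t(i,m)\Bigr)=0,\]
where $c_S$ is the coefficient of the function omitting $S$.
So each layer of $\mathrm{Supp}(G_0)$ occurs exactly once, with $\epsilon=-c_S\in\{-1,2\}$, and no coefficient of size $3$ or $4$ ever appears. The $\epsilon=2$ set contributes only something $\ge0$; it vanishes at $t=1$, $i=m$. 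The two $\epsilon=-1$ sets, however, are strictly positive. So the total is nonzero and Lemma~\ref{lemma4} is never needed.

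Third, in (c) your uniform bound is the one for $f$. For $f-g$ the value involves $\Psi_{k_2}-\Psi_{k_2-1}$, so $2^{k_1-1}\binom{m-1}{k_1-1}$ must be replaced by $2^{k_2-1}\binom{m-1}{k_2-1}$. The resulting estimate $\frac{15}{2}\bigl(2^{k_2}\binom{m-1}{k_2}+2^{k_2-1}\binom{m-1}{k_2-1}+2^{k_1}\binom{m-1}{k_1}+1\bigr)<3^m$ still holds; for $m=9$ it reads $12607.5<19683$. So your conclusion in (c) stands.
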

\textbf{Proof}. Depending on the values of $\mathbf{v}_1 , \mathbf{v}_2$, we divide into the following five cases to show that the claimed vectors satisfy inequality (\ref{eq5}).

{\bf Case 1}. If $\mathbf{v}_1 = \mathbf{0}$, $\mathbf{v}_2 \neq\mathbf{0}$. In this case, denote $wt(\mathbf{v}_2)=i$, the inequality (\ref{eq5}) reduces to
\begin{eqnarray*}
-2 \operatorname{Re}\left(\widehat{F_1}\left(\mathbf{0}\right)\right)+\operatorname{Re}\left(\widehat{F_1+F_2}\left(\mathbf{v}_2\right)\right)+\operatorname{Re}\left(\widehat{F_1-F_2}\left(-\mathbf{v}_2\right)\right)+\operatorname{Re}\left(\widehat{ F_2}\left(\mathbf{v}_2\right)\right)\neq3^m.
\end{eqnarray*}
According the choice of $F_1 $, we have the following subcases.

(1). If $F_1=f$. The inequality (\ref{eq5}) is equivalent to
\begin{eqnarray*}
&&-2\operatorname{Re}\left(\widehat{f}\left(\mathbf{0}\right)\right)+\operatorname{Re}\left(\widehat{g}\left(\mathbf{v}_2\right)\right)+ \operatorname{Re}\left(\widehat{f+g}\left(\mathbf{v}_2\right)\right)+\operatorname{Re}\left(\widehat{f - g}\left(-\mathbf{v}_2\right)\right)\neq3^m\\
&&\Longleftrightarrow \sum_{j=1}^{k_2} 2^j\binom{m}{j}-2^{k_1}\binom{m}{k_1}+\Psi_{k_2} (i, m)+1- \frac{1}{2}(\Psi_{k_1} (i, m)-\Psi_{k_1-1} (i, m))\neq 3^m.
\end{eqnarray*}
Note that $m \geq 9$ and $k_1< k_2\leq\left\lfloor\frac{m-1}{2}\right\rfloor$, by Lemmas \ref{lem3}-\ref{lemma4}, we have
\begin{eqnarray*}
\sum_{j=1}^{k_2} 2^j\binom{m}{j}-2^{k_1}\binom{m}{k_1}+\Psi_{k_2} (i, m)+1- \frac{1}{2}(\Psi_{k_1} (i, m)-\Psi_{k_1-1} (i, m))
&<&\sum_{j=k_1}^{k_2} 2^{j}\binom{m}{j}+2^{k_1-1}\binom{m-1}{k_1}\\
&+&2^{k_1-2}\binom{m-1}{k_1-1}+2^{k_2}\binom{m-1}{k_2}\\
&< &\sum_{j=k_1}^{k_2} 2^{j}\binom{m}{j}+2^{k_1+1}\binom{m}{k_1+1}\\
&+&2^{k_1-1}\binom{m}{k_1-1}+2^{k_2+1}\binom{m}{k_2+1}\\
&<& 3^m.
\end{eqnarray*}
Thus the inequality (\ref{eq5}) holds.

(2). If $F_1=g$. The inequality (\ref{eq5}) is equivalent to
\begin{eqnarray*}
&&-2\operatorname{Re}\left(\widehat{g}\left(\mathbf{0}\right)\right)+\operatorname{Re}\left(\widehat{f}\left(\mathbf{v}_2\right)\right)+ \operatorname{Re}\left(\widehat{f+g}\left(\mathbf{v}_2\right)\right)+\operatorname{Re}\left(\widehat{f - g}\left(\mathbf{v}_2\right)\right)\neq3^m\\
&&\Longleftrightarrow \sum_{j=k_1}^{k_2} 2^j\binom{m}{j}-\Psi_{k_2} (i, m)- \frac{1}{2}\Psi_{k_1-1} (i, m)+\frac{3}{2}\neq 3^m .
\end{eqnarray*}
Note that $ 2 \leq k_1<k_2 \leq\left\lfloor\frac{m-1}{2}\right\rfloor$, by Lemma \ref{lem3}, we have
\begin{eqnarray*}
\sum_{j=k_1}^{k_2} 2^j\binom{m}{j}-\Psi_{k_2} (i, m)- \frac{1}{2}\Psi_{k_1-1} (i, m)+\frac{3}{2}\leq \sum_{j=k_1}^{k_2} 2^j\binom{m}{j}+2^{k_2}\binom{m-1}{k_2}+2^{k_1-2}\binom{m-1}{k_1-1}+\frac{3}{2}<3^m.
\end{eqnarray*}
Thus the inequality (\ref{eq5}) holds.

(3). If $F_1=f+g$. The inequality (\ref{eq5}) is equivalent to
\begin{eqnarray*}
&&-2\operatorname{Re}\left(\widehat{f+g}\left(\mathbf{0}\right)\right)+\operatorname{Re}\left(\widehat{f}\left(\mathbf{v}_2\right)\right)+ \operatorname{Re}\left(\widehat{g}\left(\mathbf{v}_2\right)\right)+\operatorname{Re}\left(\widehat{f - g}\left(\mathbf{v}_2\right)\right)\neq3^m\\
&&\Longleftrightarrow \sum_{j=1}^{k_2-1} 2^j\binom{m}{j}-\frac{3}{2}\Psi_{k_2} (i, m)+\frac{1}{2}\Psi_{k_2-1} (i, m)+1\neq 3^m .
\end{eqnarray*}
Note that $2 <k_2 \leq\left\lfloor\frac{m-1}{2}\right\rfloor$, by Lemma \ref{lem3},
\[\sum_{j=1}^{k_2-1} 2^j\binom{m}{j}-\frac{3}{2}\Psi_{k_2} (i, m)+\frac{1}{2}\Psi_{k_2-1} (i, m)+1\leq \sum_{j=1}^{k_2-1} 2^j\binom{m}{j}+\frac{3}{2}\times 2^{k_2}\binom{m-1}{k_2}+ 2^{k_2-2}\binom{m-1}{k_2-1}<3^m. \]
Thus the inequality (\ref{eq5}) holds.

(4). If $F_1=f-g$.  Since the proof of this case is similar to that of (1), we omit it here.

{\bf Case 2}. If $\mathbf{v}_2 = \mathbf{0}$ and $\mathbf{v}_1 \neq\mathbf{0}$. Denote $wt(\mathbf{v}_1)=i$, according the choices of $F_1 $ and $F_2$, we have the following subcases.

(1). If $F_1=g$ and $F_2=f$. The inequality (\ref{eq5}) is equivalent to
\begin{eqnarray*}
&&-2\operatorname{Re}\left(\widehat{g}\left(\mathbf{v}_1\right)\right)+\operatorname{Re}\left(\widehat{f}\left(\mathbf{0}\right)\right)+ \operatorname{Re}\left(\widehat{f+g}\left(\mathbf{v}_1\right)\right)+\operatorname{Re}\left(\widehat{f - g}\left(\mathbf{v}_1\right)\right)\neq3^m\\
&&\Longleftrightarrow \sum_{j=1}^{k_2} 2^j\binom{m}{j}-2^{k_1}\binom{m}{k_1}-\Psi_{k_2} (i, m)+2\Psi_{k_1-1} (i, m)+\Psi_{k_1} (i, m)-2\neq 0 .
\end{eqnarray*}
Since $m \geq 9$ and $k_1< k_2\leq\left\lfloor\frac{m-1}{2}\right\rfloor$, note that $\Psi_k(i, m)=\sum_{t=0}^k K_t(i, m)$, we have
\[\sum_{j=1}^{k_2} 2^j\binom{m}{j}-2^{k_1}\binom{m}{k_1}- \sum_{t=1}^{k_2} K_t(i, m)+2\sum_{t=1}^{k_1-1} K_t(i, m)+\sum_{t=1}^{k_1} K_t(i, m)\neq0 .\]
i.e.
\[\sum_{t=1}^{k_1-1} (2^t\binom{m}{t}+2K_t(i, m))+\sum_{t=k_1+1}^{k_2} (2^t\binom{m}{t}-K_t(i, m))\neq0 .\]
Since
\[2^t\binom{m}{t}+ 2K_t(i, m)=\sum_{j=0}^t 2^{t-j}\left(2^j+2\times(-1)^j\right)\binom{i}{j}\binom{m-i}{t-j}>0,
\]
\[2^t\binom{m}{t}- K_t(i, m)=\sum_{j=0}^t 2^{t-j}\left(2^j-(-1)^j\right)\binom{i}{j}\binom{m-i}{t-j}>0,
\]
thus the inequality (\ref{eq5}) holds.

(2). If $F_1=f+g$ and $F_2=f$. Since the proof of this case is similar to that of (1), we omit it here.

(3). If $F_1=f-g$ and $F_2=f$. Since the proof of this case is similar to that of (1), we omit it here.

(4). If $F_1=f$ and $F_2=g$. The inequality (\ref{eq5}) is equivalent to
\begin{eqnarray*}
&&-2\operatorname{Re}\left(\widehat{f}\left(\mathbf{v}_1\right)\right)+\operatorname{Re}\left(\widehat{g}\left(\mathbf{0}\right)\right)+ \operatorname{Re}\left(\widehat{f+g}\left(\mathbf{v}_1\right)\right)+\operatorname{Re}\left(\widehat{f - g}\left(\mathbf{v}_1\right)\right)\neq3^m\\
&&\Longleftrightarrow \sum_{j=k_1}^{k_2} 2^j\binom{m}{j}-\Psi_{k_2} (i, m)-2\Psi_{k_1-1} (i, m)+3\Psi_{k_1} (i, m)\neq 0 .
\end{eqnarray*}
Note that $\Psi_k(i, m)=\sum_{t=0}^k K_t(i, m)$, we have
\begin{eqnarray*}
&&\sum_{j=k_1}^{k_2} 2^j\binom{m}{j}-\Psi_{k_2} (i, m)-2\Psi_{k_1-1} (i, m)+3\Psi_{k_1} (i, m)\\
&&=\left(2^{k_1}\binom{m}{k_1}+2 K_{k_1}(i, m)\right)+\sum_{j=k_1+1}^{k_2}\left(2^j\binom{m}{j}-K_j(i, m)\right) .
\end{eqnarray*}
Since
\[2^t\binom{m}{t}+2 K_t(i, m)  =\sum_{j=0}^t 2^{t-j}\left(2^j+2(-1)^j\right)\binom{i}{j}\binom{m-i}{t-j}>0,\]
\[2^t\binom{m}{t}-K_t(i, m)  =\sum_{j=0}^t 2^{t-j}\left(2^j-(-1)^j\right)\binom{i}{j}\binom{m-i}{t-j}>0,\]
thus the inequality (\ref{eq5}) holds.

(5). If $F_1=f+g$ and $ F_2=g$. Since the proof of this case is similar to that of (4), we omit it here.

(6). If $F_1=f-g$ and $F_2=g$. Since the proof of this case is similar to that of (4), we omit it here.

(7). If $F_1\in\{f,g,f+g,f-g\}$, $ F_2\in\{f+g, f-g\}$ and $F_1 \neq F_2$. Since the proofs of these cases is similar to that of (4), we omit it here.

{\bf Case 3}. If $\mathbf{v}_2\neq\mathbf{0}$, $\mathbf{v}_1 \neq\mathbf{0}$, $\mathbf{v}_1+\mathbf{v}_2\neq\mathbf{0}$ and $\mathbf{v}_1-\mathbf{v}_2\neq\mathbf{0}$. Denote $wt(\mathbf{v}_1)=i$, $wt(\mathbf{v}_2)=j$, $wt(\mathbf{v}_1+\mathbf{v}_2)=l$ and $wt(\mathbf{v}_1-\mathbf{v}_2)=h$, depending on the choices of $F_1$, we have the following subcases.

(1). If $F_1=f$. Since $ 2 \leq k_1<k_2 \leq\left\lfloor\frac{m-1}{2}\right\rfloor$, by Lemma \ref{lem3}, we have
\begin{eqnarray*}
&&\left|-2\operatorname{Re}\left(\widehat{f}\left(\mathbf{v}_1\right)\right)+\operatorname{Re}\left(\widehat{g}\left(\mathbf{v}_2\right)\right)+ \operatorname{Re}\left(\widehat{f+g}\left(\mathbf{v}_1+\mathbf{v}_2\right)\right)+\operatorname{Re}\left(\widehat{f - g}\left(\mathbf{v}_1-\mathbf{v}_2\right)\right)\right|\\
&&\leq3\left(2^{k_2}\binom{m-1}{k_2}+2^{k_1-1}\binom{m-1}{k_1-1}+2^{k_1}\binom{m-1}{k_1}+1\right)+\frac{3}{2}\left(2^{k_2}\binom{m-1}{k_2}+2^{k_1-1}\binom{m-1}{k_1-1}\right)\\
&&+\frac{3}{2}\left(2^{k_2-1}\binom{m-1}{k_2-1}+1\right)+\frac{3}{2}\left(2^{k_2}\binom{m-1}{k_2}+2^{k_1}\binom{m-1}{k_1}+2^{k_2-1}\binom{m-1}{k_2-1}+1\right)\\
&&<2^{k_2+1}\binom{m-1}{k_2+1}+2^{k_2-1}\binom{m-1}{k_2-1}+2^{k_1+1}\binom{m-1}{k_1+1}+2^{k_1}\binom{m-1}{k_1}+6<\sum_{k=0}^{m-1}2^k\binom{m-1}{k}  =3^{m-1}.
\end{eqnarray*}
Thus the inequality (\ref{eq5}) holds.

(2). If $F_1\in\{g,f+g,f-g\}$. Since the proofs of this cases are similar to that of (1), we omit it here.

{\bf Case 4}. If $\mathbf{v}_1\neq \mathbf{0}$, $\mathbf{v}_2 \neq\mathbf{0}$, $\mathbf{v}_1+\mathbf{v}_2=\mathbf{0}$ or $\mathbf{v}_1-\mathbf{v}_2=\mathbf{0}$. Since the proof of this case is similar to that of {\bf Case 2}, we omit it here.

{\bf Case 5}. If $\mathbf{v}_2=\mathbf{v}_1 =\mathbf{0}$.  we only give the proof of $F_1= f$ and $F_2= g $, omit the proofs of other
cases, whose proofs are similar to this case.
If $F_1= f$ and $F_2= g $, we have
\begin{eqnarray*}
&&-2\operatorname{Re}\left(\widehat{f}\left(\mathbf{0}\right)\right)+\operatorname{Re}\left(\widehat{g}\left(\mathbf{0}\right)\right)+ \operatorname{Re}\left(\widehat{f+g}\left(\mathbf{0}\right)\right)+\operatorname{Re}\left(\widehat{f - g}\left(\mathbf{0}\right)\right)\\
&&=-2\times 3^m+3(\sum_{k=1}^{k_1-1}2^k\binom{m}{k}+\sum_{k=k_1+1}^{k_2}2^k\binom{m}{k})+3^m-\frac{3}{2}\sum_{k=k_1}^{k_2}2^k\binom{m}{k}\\
&&+3^m-\frac{3}{2}\sum_{k=1}^{k_2-1}2^k\binom{m}{k}
+3^m-\frac{3}{2}(\sum_{k=1}^{k_1}2^k\binom{m}{k}+2^{k_2}\binom{m}{k_2})\\
&&=3^m-\frac{9}{2}2^{k_1}\binom{m}{k_1}\neq3^m,
\end{eqnarray*}
thus the inequality (\ref{eq5}) holds.\\
Summarizing the discussions above, this completes the proof of Lemma \ref{lemma8}.
\begin{thm}\label{thm4}
Let $m, k_1,k_2$ be integers with $m \geq 9$ and $2 \leq k_1<k_1+1<k_2 \leq\left\lfloor\frac{m-1}{2}\right\rfloor$, then the linear code $\mathcal{C}_{f,g}$ is minimal and has parameters
\[\left[3^m-1, m+2, \sum_{j=1}^{k_2-1} 2^j\binom{m}{j}\right].\]
Furthermore, $w_{\text {min }} / w_{\text {max }} \leq 2 / 3$ if and only if
\[3 \sum_{j=1}^{k_2-1} 2^j\binom{m}{j} \leq 4\times3^{m-1}+
    2^{k_2+1}\binom{m-1}{k_2} - 2^{k_1}\binom{m-1}{k_1-1}.\]
\end{thm}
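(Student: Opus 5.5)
Minimality follows directly from Theorem \ref{thm2}. Its condition (1) has two parts: the inequality with coefficients $(1,1,1)$ is exactly Lemma \ref{lamma6}, and the one with coefficients $(1,1,-2)$ is exactly Lemma \ref{lemma7}. Condition (2) is Lemma \ref{lemma8}. So all hypotheses of Theorem \ref{thm2} hold for the $f,g$ of Section 4 whenever $m\ge 9$ and $2\le k_1<k_1+1<k_2\le\lfloor\frac{m-1}{2}\rfloor$, and hence $\mathcal{C}_{f,g}$ is minimal. The length $3^m-1$ and dimension $m+2$ come from Theorem \ref{thm1}; the hypotheses there (each $F\in\mathcal{F}$ nonzero, vanishing at $\mathbf{0}$, not linear) are clear, since each $F$ is supported on a union of weight shells that is neither empty nor a hyperplane complement.

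For the minimum distance and $w_{\max}$, I would read off Table 1 of Theorem \ref{thm3}. Set $a,b,c,d$ as there.

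\textbf{The four low weights.} These are $a+c+d$, $b+c+d$, $a+b+c$ and $a+b+d$. By Lemma \ref{lemma4}, $d>a+b+c$ and $b>a$. Hence $a+b+c$ is the smallest of the four, and it equals $\sum_{j=1}^{k_2-1}2^j\binom{m}{j}$.

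\textbf{The $\Psi$-weights.} These have the form $3^m-3^{m-1}+(\text{combination of }\Psi\text{'s})$. By Lemma \ref{lem3}(3) they are bounded below by roughly $2\cdot 3^{m-1}-2^{k_2}\binom{m-1}{k_2}-2^{k_1}\binom{m-1}{k_1}-\cdots$. This exceeds $a+b+c$, because $\sum_{j\le k_2}2^j\binom{m}{j}$ is much smaller than $3^{m-1}$ when $k_2\le (m-1)/2$ and $m\ge 9$. So $w_{\min}=\sum_{j=1}^{k_2-1}2^j\binom{m}{j}$.

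\textbf{The maximum weight.} I would examine the four $\Psi$-families. The expression $\Psi_{k_2}-\Psi_{k_1-1}$ evaluated at $i=1$ equals $\sum_{t=k_1}^{k_2}K_t(0,m-1)=\sum_{t=k_1}^{k_2}2^t\binom{m-1}{t}$, using Lemma \ref{lem3}(1),(2). The intended maximum is
\[
w_{\max}=3^m-3^{m-1}+\tfrac12\bigl(2^{k_2+1}\binom{m-1}{k_2}-2^{k_1}\binom{m-1}{k_1-1}\bigr)\cdot(\text{appropriate normalisation}),
\]
so that the inequality $w_{\min}/w_{\max}\le 2/3$, i.e.\ $3w_{\min}\le 2w_{\max}$, reads
\[
3\sum_{j=1}^{k_2-1}2^j\binom{m}{j}\le 4\cdot3^{m-1}+2^{k_2+1}\binom{m-1}{k_2}-2^{k_1}\binom{m-1}{k_1-1}.
\]
This means $2w_{\max}=4\cdot 3^{m-1}+2^{k_2+1}\binom{m-1}{k_2}-2^{k_1}\binom{m-1}{k_1-1}$, i.e.\ the combination of $\Psi$'s in $w_{\max}$ should be $2^{k_2}\binom{m-1}{k_2}-2^{k_1-1}\binom{m-1}{k_1-1}$. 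I would identify which row and which $i$ produce this, most likely $i=1$, where each $\Psi_k(1,m)=2^k\binom{m-1}{k}$ and all terms are explicit. I would then show it dominates all other rows at all $i$, using $|\Psi_k(i,m)|\le 2^k\binom{m-1}{k}$ together with the exact $i=1$ values. Once $w_{\min}$ and $w_{\max}$ are explicit, the equivalence is a rearrangement.

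\textbf{Main obstacle.} Establishing exactly which entry attains $w_{\max}$ is the hard step. The crude bounds of Lemma \ref{lem3}(3) are tight only at $i=1$, and differences like $\Psi_{k_2}-\Psi_{k_1-1}$ require comparing values of different signs at other $i$. My first attempt would rely on the $i=1$ evaluation plus the monotonicity $2^k\binom{m-1}{k}$ increasing in $k$ for $k\le (m-1)/2$. If that fails for some row, I would fall back on the decomposition into $K_t$ terms with the positivity identities $2^t\binom{m}{t}\pm K_t>0$ used in Lemmas \ref{lamma6}--\ref{lemma8}.
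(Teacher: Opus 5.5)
Your route is the same as the paper's. Minimality comes from Theorem~\ref{thm2} via Lemmas~\ref{lamma6}, \ref{lemma7} and \ref{lemma8}. Length and dimension come from Theorem~\ref{thm1}. $w_{\min}=a+b+c$ comes from Table~1 and Lemma~\ref{lemma4}, and your comparison through $d>a+b+c$ is more explicit than the paper's. $w_{\max}$ is read off the $g$-row at $\mathrm{wt}(\mathbf{v})=1$. Everything up to and including $w_{\min}$ is fine, modulo routine binomial estimates.

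The genuine gap is $w_{\max}$: you only sketch it, and the tools you name cannot deliver it. Write $P_k=2^k\binom{m-1}{k}$, so $\Psi_k(1,m)=P_k$. You must show that for every $i\ge 2$ each of
\[
\Psi_{k_2}-\Psi_{k_1}+\Psi_{k_1-1}-1,\qquad \Psi_{k_2}-\Psi_{k_1-1},\qquad \Psi_{k_2-1}-1,\qquad \Psi_{k_1}+\Psi_{k_2}-\Psi_{k_2-1}-1
\]
(all evaluated at $(i,m)$) is at most $P_{k_2}-P_{k_1-1}$.

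Lemma~\ref{lem3}(3) only gives $\Psi_{k_2}(i,m)-\Psi_{k_1-1}(i,m)\le P_{k_2}+P_{k_1-1}$, and $P_{k_1}+P_{k_2}+P_{k_2-1}-1$ for the last combination. Both overshoot the target. That bound is attained at $i=1$ and says nothing about how much smaller $|\Psi_k(i,m)|$ is for $i\ge 2$. The identities $2^t\binom{m}{t}\pm K_t(i,m)>0$ are weaker still. The paper only asserts this step ``by Lemma~3'', which has exactly the same defect, so there is no argument there to borrow. The claimed value does check out in the $m=9$ example, so the target is plausible; the argument is what's missing.

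A usable extra estimate is the following. From the definition, $|K_t(x,n)|$ is at most the coefficient of $z^t$ in $(1+z)^x(1+2z)^{n-x}$, and this coefficient is nonincreasing in $x$. Hence $|\Psi_k(i,m)|=|K_k(i-1,m-1)|\le P_k-2^{k-1}\binom{m-2}{k-1}$ for $i\ge 2$.

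This settles the $g$-row, because $2^{k_2-1}\binom{m-2}{k_2-1}\ge 2^{k_1}\binom{m-1}{k_1-1}$ when $k_1+1\le k_2-1$ and $k_1\le\frac{m-5}{2}$. The $(f+g)$-row is already fine by Lemma~\ref{lem3}(3). For the $f$-row, applying the bound termwise fails in general, e.g.\ for $m=41$, $k_1=18$, $k_2=20$. For the $(f-g)$-row it never suffices: there $\Psi_{k_2}-\Psi_{k_2-1}=K_{k_2}(i,m)$, and the same bound at $i=2$ is $P_{k_2}+2^{k_2-2}\binom{m-2}{k_2-2}$. So you also need exact evaluation for the first few values of $i\ge 2$.

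At $i=1$ the comparison of the four rows does go through. It needs $P_{k+1}>2P_k$ for $k<\frac{m-2}{2}$, not just monotonicity of $P_k$.

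Finally, correct your evaluation at $i=1$. By Lemma~\ref{lem3}(1), $\Psi_{k_2}(1,m)-\Psi_{k_1-1}(1,m)=K_{k_2}(0,m-1)-K_{k_1-1}(0,m-1)=P_{k_2}-P_{k_1-1}$, not $\sum_{t=k_1}^{k_2}2^t\binom{m-1}{t}$. With the correct value the $g$-row gives $w=2\cdot 3^{m-1}+P_{k_2}-P_{k_1-1}$ directly, with no need to reverse-engineer it from the target inequality. The stated equivalence is then just the rearrangement $3w_{\min}\le 2w_{\max}$.
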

\textbf{Proof}. From Lemmas \ref{lamma6}-\ref{lemma8}, we know that $\mathcal{C}_{f,g}$ satisfies Theorem \ref{thm2}. Hence, $\mathcal{C}_{f,g}$ is minimal.
From Theorem \ref{thm3} and Lemma \ref{lemma4}, the minimum nonzero weight of the code is
\[w_{\min }=\sum_{j=1}^{k_2-1} 2^j\binom{m}{j},\]
by Lemma \ref{lem3}, the maximum nonzero weight is attained for a vector $\mathbf{v}$ with $\operatorname{wt}(\mathbf{v})=1$ and the function $g$, namely
\[w_{\text{max}} = 3^m - 3^{m-1} +2^{k_2}\binom{m-1}{k_2} - 2^{k_1-1}\binom{m-1}{k_1-1}.\]
The inequality $\frac{w_{\text {min }}}{w_{\text {max }}} \leq \frac{2}{3}$ is equivalent to
\[3 \sum_{j=1}^{k_2-1} 2^j\binom{m}{j} \leq 4\times3^{m-1}+ 2^{k_2+1}\binom{m-1}{k_2} - 2^{k_1}\binom{m-1}{k_1-1}.\]
This completes the proof of Theorem \ref{thm4}.

Based on Magma's program, the following example is presented, which is accordant with the results given by Theorem \ref{thm3}.
\begin{exmp}
Let $m=9, k_1=2, k_2=4$ , $\mathcal{C}_{f, g}$ in Theorem \ref{thm3} is minimal with parameters $[19682,11,834]$, and the complete weight enumerator is
\begin{eqnarray*}
&& w_{0}^{19682}
+ 19682\, w_{0}^{6560} w_{1}^{6561} w_{2}^{6561}
+ w_{0}^{16976} w_{1}^{2706}
+ w_{0}^{16976} w_{2}^{2706} \\
&& + w_{0}^{16850} w_{1}^{816} w_{2}^{2016}
+ w_{0}^{16850} w_{1}^{2016} w_{2}^{816}
+ w_{0}^{18848} w_{1}^{162} w_{2}^{672}
+ w_{0}^{18848} w_{1}^{672} w_{2}^{162} \\
&& + w_{0}^{17504} w_{1}^{18} w_{2}^{2160}
+ w_{0}^{17504} w_{1}^{2160} w_{2}^{18}
+ 18\, w_{0}^{5456} w_{1}^{6993} w_{2}^{7233}
+ 18\, w_{0}^{5456} w_{1}^{7233} w_{2}^{6993} \\
&& + 18\, w_{0}^{5537} w_{1}^{7584} w_{2}^{6561}
+ 18\, w_{0}^{5537} w_{1}^{6561} w_{2}^{7584}
+ 18\, w_{0}^{6113} w_{1}^{6672} w_{2}^{6897}
+ 18\, w_{0}^{6113} w_{1}^{6897} w_{2}^{6672} \\
&& + 18\, w_{0}^{5777} w_{1}^{6576} w_{2}^{7329}
+ 18\, w_{0}^{5777} w_{1}^{7329} w_{2}^{6576}
+ 144\, w_{0}^{6293} w_{1}^{6744} w_{2}^{6645}
+ 144\, w_{0}^{6293} w_{1}^{6645} w_{2}^{6744} \\
&& + 144\, w_{0}^{6338} w_{1}^{6783} w_{2}^{6561}
+ 144\, w_{0}^{6338} w_{1}^{6561} w_{2}^{6783}
+ 144\, w_{0}^{6365} w_{1}^{6630} w_{2}^{6687}
+ 144\, w_{0}^{6365} w_{1}^{6687} w_{2}^{6630} \\
&& + 144\, w_{0}^{6407} w_{1}^{6573} w_{2}^{6702}
+ 144\, w_{0}^{6407} w_{1}^{6702} w_{2}^{6573}
+ 672\, w_{0}^{6590} w_{1}^{6603} w_{2}^{6489}
+ 672\, w_{0}^{6590} w_{1}^{6489} w_{2}^{6603} \\
&& + 672\, w_{0}^{6608} w_{1}^{6513} w_{2}^{6561}
+ 672\, w_{0}^{6608} w_{1}^{6561} w_{2}^{6513}
+ 672\, w_{0}^{6509} w_{1}^{6597} w_{2}^{6576}
+ 672\, w_{0}^{6509} w_{1}^{6576} w_{2}^{6597} \\
&& + 672\, w_{0}^{6596} w_{1}^{6570} w_{2}^{6516}
+ 672\, w_{0}^{6596} w_{1}^{6516} w_{2}^{6570}
+ 2016\, w_{0}^{6617} w_{1}^{6543} w_{2}^{6522}
+ 2016\, w_{0}^{6617} w_{1}^{6522} w_{2}^{6543} \\
&& + 2016\, w_{0}^{6617} w_{1}^{6504} w_{2}^{6561}
+ 2016\, w_{0}^{6617} w_{1}^{6561} w_{2}^{6504}
+ 2016\, w_{0}^{6572} w_{1}^{6573} w_{2}^{6537}
+ 2016\, w_{0}^{6572} w_{1}^{6537} w_{2}^{6573} \\
&& + 2016\, w_{0}^{6587} w_{1}^{6567} w_{2}^{6528}
+ 2016\, w_{0}^{6587} w_{1}^{6528} w_{2}^{6567}
+ 4032\, w_{0}^{6563} w_{1}^{6537} w_{2}^{6582}
+ 4032\, w_{0}^{6563} w_{1}^{6582} w_{2}^{6537} \\
&& + 4032\, w_{0}^{6554} w_{1}^{6567} w_{2}^{6561}
+ 4032\, w_{0}^{6554} w_{1}^{6561} w_{2}^{6567}
+ 4032\, w_{0}^{6581} w_{1}^{6558} w_{2}^{6543}
+ 4032\, w_{0}^{6581} w_{1}^{6543} w_{2}^{6558} \\
&& + 4032\, w_{0}^{6542} w_{1}^{6564} w_{2}^{6576}
+ 4032\, w_{0}^{6542} w_{1}^{6576} w_{2}^{6564}
+ 5376\, w_{0}^{6536} w_{1}^{6558} w_{2}^{6588}
+ 5376\, w_{0}^{6536} w_{1}^{6588} w_{2}^{6558} \\
&& + 5376\, w_{0}^{6527} w_{1}^{6594} w_{2}^{6561}
+ 5376\, w_{0}^{6527} w_{1}^{6561} w_{2}^{6594}
+ 5376\, w_{0}^{6563} w_{1}^{6552} w_{2}^{6567}
+ 5376\, w_{0}^{6563} w_{1}^{6567} w_{2}^{6552} \\
&& + 5376\, w_{0}^{6542} w_{1}^{6561} w_{2}^{6579}
+ 5376\, w_{0}^{6542} w_{1}^{6579} w_{2}^{6561}
+ 4608\, w_{0}^{6563} w_{1}^{6579} w_{2}^{6540}
+ 4608\, w_{0}^{6563} w_{1}^{6540} w_{2}^{6579} \\
&& + 4608\, w_{0}^{6563} w_{1}^{6558} w_{2}^{6561}
+ 4608\, w_{0}^{6563} w_{1}^{6561} w_{2}^{6558}
+ 4608\, w_{0}^{6545} w_{1}^{6555} w_{2}^{6582}
+ 4608\, w_{0}^{6545} w_{1}^{6582} w_{2}^{6555} \\
&& + 4608\, w_{0}^{6587} w_{1}^{6558} w_{2}^{6537}
+ 4608\, w_{0}^{6587} w_{1}^{6537} w_{2}^{6558}
+ 2304\, w_{0}^{6590} w_{1}^{6573} w_{2}^{6519}
+ 2304\, w_{0}^{6590} w_{1}^{6519} w_{2}^{6573} \\
&& + 2304\, w_{0}^{6608} w_{1}^{6513} w_{2}^{6561}
+ 2304\, w_{0}^{6608} w_{1}^{6561} w_{2}^{6513}
+ 2304\, w_{0}^{6554} w_{1}^{6567} w_{2}^{6561}
+ 2304\, w_{0}^{6554} w_{1}^{6561} w_{2}^{6567} \\
&& + 2304\, w_{0}^{6596} w_{1}^{6555} w_{2}^{6531}
+ 2304\, w_{0}^{6596} w_{1}^{6531} w_{2}^{6555}
+ 512\, w_{0}^{6482} w_{1}^{6513} w_{2}^{6687}
+ 512\, w_{0}^{6482} w_{1}^{6687} w_{2}^{6513} \\
&& + 512\, w_{0}^{6527} w_{1}^{6594} w_{2}^{6561}
+ 512\, w_{0}^{6527} w_{1}^{6561} w_{2}^{6594}
+ 512\, w_{0}^{6617} w_{1}^{6588} w_{2}^{6477}
+ 512\, w_{0}^{6617} w_{1}^{6477} w_{2}^{6588} \\
&& + 512\, w_{0}^{6407} w_{1}^{6552} w_{2}^{6723}
+ 512\, w_{0}^{6407} w_{1}^{6723} w_{2}^{6552}
\end{eqnarray*}
Thus $\frac{w_{\text{min}}}{w_{\text{max}}} = \frac{834}{14226} < \frac{2}{3}$.
\end{exmp}

\section{Conclusions}

In this paper, we extend the results of \cite{Heng-Ding-Zhou} to present a generic construction for ternary linear codes with dimension $m+2$, and give a necessary and sufficient condition for this ternary linear code to be minimal in term of Walsh transform. Based on this generic construction and Krawtchouk polynomials, we use characteristic functions to obtain a class of ternary linear codes, and determine their parameters and complete weight enumerator. Furthermore, we prove these ternary linear codes is minimal codes which violate the Ashikhmin-Barg condition.

Based on the generic construction (\ref{shizi2}), it would be interesting to find more minimal ternary linear codes violating the Ashikhmin-Barg condition. Another interesting research challenge is to extend the construction (\ref{shizi2}) to $\mathbb{F}_p$ to obtain minimal linear codes violating the Ashikhmin-Barg condition, where $p$ is an odd prime.

\end{document}